\newtheorem{thm}{Theorem}[section]
\newtheorem{lem}[thm]{Lemma}
\newtheorem{cor}[thm]{Corollary}
\theoremstyle{definition}
\newtheorem{defn}[thm]{Definition}
\theoremstyle{remark}
\newtheorem{remark}[thm]{Remark}
\numberwithin{equation}{section}
\DeclareMathOperator{\asinh}{arcsinh}
\newcommand{\Lim}[1]{\raisebox{0.5ex}{\scalebox{0.8}{$\displaystyle \lim_{#1}\;$}}}
\newcommand{\doslineas}[2]{\genfrac{}{}{0pt}{}{#1}{#2}}
\begin{document}

\title[Approximate solutions of the hyperbolic Kepler's equation]{A new method for obtaining approximate solutions of the hyperbolic Kepler's equation}

\author[M.~Avendano]{Mart\'\i n Avendano}
\address{Centro Universitario de la Defensa\\
         Academia General Militar\\
         Ctra. de Huesca s/n\\
         50090, Zaragoza, Spain and IUMA, Universidad de Zaragoza, Spain}
\email{avendano@unizar.es}

\author[V.~Mart\'\i n-Molina]{Ver\'onica Mart\'\i n-Molina}
\address{Departamento de Matem\'aticas and IUMA\\
         Facultad de Educaci\'on. Universidad de Zaragoza, Pedro Cerbuna 12\\
         50009, Zaragoza, Spain}
\email{vmartin@unizar.es}

\author[J.~Ortigas-Galindo]{Jorge Ortigas-Galindo}
\address{Instituto de Educaci\'on Secundaria \'Elaios\\
        Andador Pilar Cuartero 3\\
        50018, Zaragoza, Spain and IUMA, Universidad de Zaragoza, Spain}
\email{jortigas@unizar.es, jortigas@educa.aragon.es}

\keywords{Hyperbolic Kepler's equation, Newton's method, Smale's $\alpha$-theory, Optimal starter.}

\begin{abstract}
We provide an approximate zero $\widetilde{S}(g,L)$ for the hyperbolic Kepler's equation $S-g\,\asinh(S)-L=0$ for
$g\in(0,1)$ and $L\in[0,\infty)$. We prove, by using Smale's $\alpha$-theory, that Newton's method starting at our
approximate zero produces a sequence that converges to the actual solution $S(g,L)$ at quadratic speed, i.e.~if
$S_n$ is the value obtained after $n$ iterations, then $|S_n-S|\leq 0.5^{2^n-1}|\widetilde{S}-S|$. The
approximate zero $\widetilde{S}(g,L)$ is a piecewise-defined function involving several linear expressions and one
with cubic and square roots. In bounded regions of $(0,1) \times [0,\infty)$ that exclude a small neighborhood of $g=1, L=0$, we also provide a method to construct simpler starters involving only constants.
\end{abstract}

\maketitle

\section{Introduction}\label{sec-intro}

Keplerian orbits can be described by two parameters, the semilatus rectum $p>0$ and the eccentricity $e\geq 0$, from
which all other orbital constants can be derived. According to Kepler's laws, the coordinates $(x,y)$ at time $t$ of a
satellite in the standard reference frame (central body at the origin and $x$-axis pointing to the periapsis) are given by
\begin{center}
{
\def\arraystretch{1.5}
\begin{tabular}{l|l|l}
Elliptic ($e<1$)                    & Parabolic ($e=1$)            & Hyperbolic ($e>1$)  \\
\hline
$x=\frac{p}{1-e^2}(\cos(E)-e)$       & $x=\frac{p}2(1-D^2)$         & $x=\frac{p}{1-e^2}(\cosh(H)-e)$ \\
$y=\frac{p}{\sqrt{1-e^2}}\sin(E)$    & $y=pD$                       & $y=\frac{p}{\sqrt{e^2-1}}\sinh(H)$ \\
$E-e\sin(E)=M$                       & $D+\frac{D^3}3=M$            & $e\sinh(H)-H=M$ \\
$M=\sqrt{\frac{\mu(1-e^2)^3}{p^3}}(t-t_0)$ & $M=\sqrt{\frac{4\mu}{p^3}}(t-t_0)$ & $M=\sqrt{\frac{\mu(e^2-1)^3}{p^3}}(t-t_0)$
\end{tabular}
}
\end{center}
where $\mu$ is the gravitational parameter of the central body and $t_0$ is the time of passage through periapsis \cite{MP94}. The value $M$, called mean anomaly, is a linear function of the time. $E$, $D$ and $H$ are the elliptic, parabolic and hyperbolic eccentric anomalies respectively, which are functions of the time which are related to $M$ through Kepler's equation (third line of formulas in the table above).
The eccentric anomalies are needed to compute the coordinates $x$ and $y$, hence the need for a method to solve Kepler's equation.

In the case of circular orbits ($e=0$), the equation for the eccentric anomaly reduces to $E=M$, and in the case of
parabolic trajectories ($e=1$), the cubic equation for $D$ can be solved exactly:
\[
  D=\sqrt[3]{\frac{3M+\sqrt{9M^2+4}}{2}}+\sqrt[3]{\frac{3M-\sqrt{9M^2+4}}{2}}.
\]
Therefore, the real problem resides on the cases $0<e<1$ and $e>1$, where the equation involves a combination of algebraic and
trascendental functions.

For simplicity, the hyperbolic Kepler's equation is usually written as $\sinh(H)-gH=L$, where $g=\frac{1}e\in(0,1)$ and
$L=\frac{M}e\in(-\infty,\infty)$. Since the left side of the equation is an odd function of $H$, it is enough to consider
$L\in[0,\infty)$. Moreover, assuming that $H$ can be found, the formula for the $y$ coordinate requires $\sinh(H)$, so it
makes more sense to directly find $S=\sinh(H)$ solving
\begin{equation}\label{eq-hkepler}
f_{g,L}(S)=  S-g\asinh(S)-L=0.
\end{equation}
Once that is done, we can compute $y$ as $y=\frac{p}{\sqrt{e^2-1}}S$ and, since $\cosh(H)=\sqrt{1+\sinh^2(H)}$, $x$ as $x=\frac{p}{1-e^2}(\sqrt{1+S^2}-e)$.

While there are many articles discussing solutions for the elliptic case (see \cite{AMO14}, \cite{Colw}, \cite{DB83}, \cite{MP94}, \cite{Ng}, \cite{OG86}, \cite{TB89}, among others), the hyperbolic case has received less attention.
Prussing~\cite{P77} and Serafin~\cite{S86}  gave upper bounds for the actual solution of the hyperbolic Kepler's equation, which
can be used as starters for Newton's method since $f_{g,L}''\geq 0$.
Gooding and Odell \cite{OG88} solved the hyperbolic Kepler's equation by using Newton's method starting from a well-tuned formula depending on the parameters $g$ and $L$. Their approach gives a relative accuracy of $10^{-20}$ with only two iterations. Although their starter is a single formula that works on the entire region $(0,1)\times[0,\infty)$, it is too complicated to provide an efficient
implementation.

In contrast, Fukushima \cite{F97} focused on the efficiency and simplicity of the starter rather than trying to find a universal formula, and produced a starter that is defined by different formulas, each valid in a stripe-like region. He showed that his starter converges under Newton's method, but not necessarily at quadratic speed.

Our approach to solve Eq.~\eqref{eq-hkepler} is to use Newton's method starting from a value $\widetilde{S}(g,L)$
that is close enough to the actual solution $S(g,L)$ to guarantee quadratic convergence speed, i.e.~if $S_n$ denotes the
value obtained after $n$ iterations, then $|S_n-S|\leq 0.5^{2^n-1}|\widetilde{S}-S|$. We use a simple criterion, Smale's $\alpha$-test~\cite{BCSS} with the constant $\alpha_0$ improved by Wang and Han~\cite{WH89}, to decide whether a starter gives the
claimed convergence rate. Values that satisfy the test are called approximate zeros.

\begin{defn}[Smale's $\alpha$-test] \label{def}
Let $f:(a,b)\subseteq\mathbb{R}\to\mathbb{R}$ be an infinitely differentiable function. The value $z\in(a,b)$ is an
approximate zero of $f$ if it satisfies the following condition
\[
  \alpha(f,z)=\beta(f,z) \cdot \gamma(f,z) <\alpha_0,
\]
where
\[
  \beta(f,z)=\left|\frac{f(z)}{f'(z)} \right|,\quad
  \gamma(f,z)=\sup_{k \geq 2} \left| \frac{f^{(k)}(z)}{k!f'(z)}\right|^{\frac{1}{k-1}}
\]
and $\alpha_0=3-2\sqrt{2}\approx 0.1715728$.
\end{defn}

In an earlier paper \cite{AMO14}, we obtained the following simple approximate zero $\widetilde{E}(e,M)$ for the elliptic case.
\begin{thm}[\cite{AMO14}]\label{thm-starterKepler}
The starter
\[
  \widetilde{E}(e,M)=\left\{
  \begin{array}{cl}
    M & \text{if } e \leq \nicefrac12\text{ or } M \geq \nicefrac{2\pi}{3} \\
    \nicefrac{2\pi}{3} & \text{if } e > \nicefrac12\text{ and } \nicefrac{\pi}{4}\leq M <  \nicefrac{2\pi}{3} \\
    \nicefrac{\pi}{2} & \text{if } e > \nicefrac12\text{ and } \nicefrac{\pi}{7}\leq M < \nicefrac{\pi}{4} \\
    \frac{M}{1-e} & \text{if } e > \nicefrac12,\; M<\nicefrac{\pi}{7} \text{ and } M <\frac{ \sqrt[4]{12\alpha}(1-e)^{\nicefrac32}}{\sqrt{e}} \\
    \frac{\sqrt[3]{6Me^2}}{e}-\frac{2(1-e)}{\sqrt[3]{6Me^2}} & \text{otherwise}
  \end{array}
  \right.
\]
is an approximate zero of $E-e\sin(E)-M$ for all $e\in[0,1)$ and $M\in[0,\pi]$.
\end{thm}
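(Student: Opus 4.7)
The plan is to verify Smale's $\alpha$-test, $\alpha(f_{e,M},\widetilde{E}) < \alpha_0 = 3-2\sqrt{2}$, region by region for each of the five pieces of the starter. The function $f = f_{e,M}(E) = E - e\sin E - M$ enjoys the convenient bounds $f'(E) = 1 - e\cos E \geq 1-e$ and $|f^{(k)}(E)| \leq e$ for every $k\geq 2$, with the sharper estimate $|f''(E)| = e|\sin E| \leq e|E|$ when $E$ is small. These let me control
\[
  \gamma(f,\widetilde{E}) \leq \sup_{k\geq 2}\left(\frac{|f^{(k)}(\widetilde{E})|}{k!\,(1-e\cos\widetilde{E})}\right)^{\!1/(k-1)},
\]
whose supremum, for each of the starters below, is attained at $k=2$ (or at worst $k=3$) and reduces to an explicit function of $e$ and $\widetilde{E}$.

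Cases 1--3 use $M$ or the constants $2\pi/3$, $\pi/2$ as starters. For $\widetilde{E}=M$ one has $f(M) = -e\sin M$, so $\beta = e|\sin M|/(1-e\cos M)$; the disjunction $e\leq \tfrac12$ or $M\geq \tfrac{2\pi}{3}$ forces either a small $e$ or $\cos M < -\tfrac12$, both of which keep $\beta\gamma$ below $\alpha_0$. For the constant starters we have $\cos\widetilde{E}\leq 0$, so $f'(\widetilde{E})\geq 1$, and $|f(\widetilde{E})|$ is bounded as a function of $e$ alone on the corresponding $M$-stripe; the specific thresholds $\pi/7$ and $\pi/4$ are calibrated so that $\alpha$ stays below $\alpha_0$ throughout each stripe.

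The remaining cases 4 and 5 operate in the regime where $e$ is close to $1$ and $M$ is small, and both rest on the cubic Taylor approximation $e\sin E \approx eE - eE^3/6$ of the original equation. For the linear starter $\widetilde{E}=M/(1-e)$, the identity $f(\widetilde{E}) = e\widetilde{E} - e\sin\widetilde{E}$ combined with $|u - \sin u|\leq u^3/6$ gives $|f(\widetilde{E})| \leq eM^3/(6(1-e)^3)$, hence $\beta \leq eM^3/(6(1-e)^4)$, while the sharpened bound $|f''(\widetilde{E})| \leq e\widetilde{E}$ yields $\gamma \leq eM/(2(1-e)^2)$, so
\[
  \beta\gamma \leq \frac{e^2 M^4}{12(1-e)^6}.
\]
The hypothesis $M < \sqrt[4]{12\alpha_0}\,(1-e)^{3/2}/\sqrt{e}$ is then exactly the condition $e^2M^4/(12(1-e)^6) < \alpha_0$. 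The starter of case 5 is a two-term asymptotic approximation (essentially one Newton step from $\sqrt[3]{6M/e}$) to the real root of the cubic $(1-e)E + eE^3/6 = M$; substituting it into $f$ yields a residual that combines the cubic-approximation error and the quintic Taylor remainder $e\widetilde{E}^5/120$, both controllable in the region where case 5 applies.

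The main obstacle is the bookkeeping at the interfaces between the five regions: the constants $\pi/7$, $\pi/4$, $2\pi/3$ and the exponents $3/2$ and $1/2$ in the threshold for case 4 must be chosen consistently so that the $\alpha$-estimate stays below $\alpha_0$ on both sides of every boundary. The transition from case 4 to case 5, where the cubic-root expression must take over from the linear starter without any gap in the $\alpha$-bound, is where the bulk of the technical work lies.
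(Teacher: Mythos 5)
Note first that the paper does not prove Theorem~\ref{thm-starterKepler}: it is quoted verbatim from the earlier paper~\cite{AMO14} as background, and the present paper develops the analogous machinery only for the hyperbolic equation in Section~\ref{sec-starter}. So a line-by-line comparison is not possible, but your outline can be checked against the methodology the paper uses for the hyperbolic case (Lemmas~\ref{lemma_asinh}--\ref{lemma_gamma}, Theorems~\ref{thm-cubica-ng}--\ref{thm-lineal}), and it is structurally the same: split $(e,M)$-space into stripes, bound $\beta$ and $\gamma$ separately, and exploit the cubic Taylor structure near $e\to 1$, $M\to 0$.

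There is, however, a genuine gap in the only case you worked in detail. For case~4 you write $\gamma \leq eM/(2(1-e)^2)$, which is the bound on the $k=2$ contribution alone. But $\gamma$ is the supremum over all $k\geq 2$, and the $k=3$ contribution is
\[
\left(\frac{e|\cos\widetilde{E}|}{6\,(1-e\cos\widetilde{E})}\right)^{1/2}\;\leq\;\left(\frac{e}{6(1-e)}\right)^{1/2},
\]
which does \emph{not} vanish as $\widetilde{E}\to 0$. A short computation shows that the $k=3$ term exceeds the $k=2$ term whenever $\widetilde{E}<\sqrt{2(1-e)/(3e)}$, a subregion of case~4, so your claimed inequality $\beta\gamma\leq e^2M^4/(12(1-e)^6)$ is false there. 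The conclusion of the $\alpha$-test still survives --- one can check directly that $\beta\cdot\gamma_3\leq (12\alpha_0)^{3/4}/(6\sqrt 6)\approx 0.117<\alpha_0$ under the same hypothesis on $M$ --- but the argument as written does not establish it. This is precisely the difficulty the paper's Lemma~\ref{lemma_gamma} is designed to handle: it produces a bound on $\gamma$ as a maximum over the low-order terms $a_2,\dots,a_{n-1}$ plus a uniform tail estimate from Lemma~\ref{lemma_sup}, rather than assuming the supremum sits at $k=2$. Your proposal should incorporate an analogue of that lemma (for $\sin$ the derivative bounds are trivial, so the analogue is easy, but the tail still has to be controlled). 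Finally, cases~1--3 and~5, and the matching conditions across the stripe boundaries, are only asserted, not verified; in a proof based on numerically calibrated thresholds like $\pi/7$, $\pi/4$, $2\pi/3$ these verifications \emph{are} the content, so the proposal as it stands is an outline rather than a proof.
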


Our main result in this paper, proven in Section \ref{sec-starter}, is a starter for the hyperbolic Kepler's equation that is a piecewise-defined function in eight stripe-like regions. In seven of them, we use only linear expressions in $g$ and $L$, while in the last one we need a cubic and a square root.

\begin{thm}\label{thm-starterKeplerS}
The starter
\[
  \widetilde{S}(g,L)=\left\{
  \begin{array}{ll}
    L+2.30  \, g                                                               & \text{if }\, 4-1.9 \, g       <       L   \\
    L+1.90  \, g                                                               & \text{if }\, 2.74 - 1.56\,g < L  \leq 4 -     1.9\,g    \\
    L+1.56 \, g                                                               & \text{if }\, 2.01 - 1.33\,g < L  \leq 2.74 - 1.56\,g  \\
    L+1.33 \, g                                                               & \text{if }\, 1.60 - 1.16\,g < L  \leq 2.01 - 1.33\,g \\
    L+1.16 \, g                                                               & \text{if }\, 1.32 - 1.02\,g < L \leq  1.60 - 1.16\,g\\
    L+1.02 \, g                                                               & \text{if }\, 1.12 - 0.91\,g < L  \leq 1.32 - 1.02\,g \\
    L+0.91 \, g                                                               & \text{if }\, 1-\frac56  \,g < L  \leq 1.12 - 0.91\,g \\
    \text{the exact solution of } (1-g) \widetilde{S}+g\frac{\widetilde{S}^3}{6}=L    & \text{if }\,  0 \leq L \leq 1-\frac56 \, g\\        \end{array}
  \right.
\]
is an approximate zero of $f_{g,L}$ for all $g\in (0,1)$ and $L\in[0,\infty)$.
\end{thm}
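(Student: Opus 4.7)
The plan is to verify, stripe by stripe, that $\alpha(f_{g,L},\widetilde{S}(g,L))<\alpha_0=3-2\sqrt{2}$ for each of the eight regions of the piecewise definition. Since $f_{g,L}(S)=S-g\,\asinh(S)-L$, the derivatives are
\[
  f'_{g,L}(S)=1-\frac{g}{\sqrt{1+S^2}},\qquad f^{(k)}_{g,L}(S)=-g\,\asinh^{(k)}(S)\ \text{for }k\geq 2,
\]
so $f'_{g,L}(S)\geq 1-g>0$ on the whole range. The first step is to produce a tractable upper bound on $\gamma(f_{g,L},S)$. Since $\asinh$ is holomorphic in $\mathbb{C}\setminus\{\pm i\}$, its Taylor series centered at a real $S$ has radius of convergence $\sqrt{1+S^2}$; Cauchy's estimates on a slightly smaller circle yield an explicit bound of the form $|\asinh^{(k)}(S)/k!|^{1/(k-1)}\leq C(S)$, which combined with the lower bound on $f'_{g,L}$ gives an explicit $\Gamma(g,S)$ with $\gamma(f_{g,L},S)\leq\Gamma(g,S)$.

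Next, I would evaluate $\beta(f_{g,L},\widetilde{S})=|f_{g,L}(\widetilde{S})|/f'_{g,L}(\widetilde{S})$ in each stripe. Observe that the stripe boundaries have the form $L=a-c\,g$, which is exactly the locus $\widetilde{S}=L+c\,g=a$; hence in each of the seven linear stripes the starter takes values in a bounded interval $[a_i,a_{i-1}]$. For $\widetilde{S}=L+c_i g$, the numerator of $\beta$ reduces to $g\,|c_i-\asinh(L+c_i g)|$, so $c_i$ is essentially $\asinh$ of a point inside $[a_i,a_{i-1}]$. I would then check, monotonically in $L$, that the worst case in each stripe still gives $\beta\cdot\Gamma<\alpha_0$; the specific constants $c_i\in\{0.91,1.02,1.16,1.33,1.56,1.90,2.30\}$ and the numerical endpoints of the stripes are calibrated precisely so that these inequalities hold with a small margin.

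The remaining stripe $0\leq L\leq 1-\tfrac{5}{6}g$ is the delicate one. Here $S$ is small and we use the Taylor expansion $\asinh(S)=S-S^3/6+O(S^5)$, which gives $f_{g,L}(S)=(1-g)S+gS^3/6-L+O(gS^5)$. The starter is defined as the exact positive root of the truncated cubic $(1-g)\widetilde S+g\widetilde S^3/6=L$, which by Cardano's formula is precisely the cubic-and-square-root expression promised by the theorem; the residual $f_{g,L}(\widetilde S)$ is then of order $g\widetilde S^5$, keeping $\beta$ small, while the upper bound $\widetilde S\leq \tfrac{5}{6}$ in this region controls $\Gamma(g,\widetilde S)$.

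The main obstacle I expect is this last region, because it contains the problematic limit $(g,L)\to(1,0)$ where $f'_{g,L}\to 0$. Two things must be done carefully: first, produce the explicit closed form for $\widetilde S$ and verify that it lies in an interval small enough for the quintic Taylor remainder to dominate $\beta$; second, certify that the product $\beta\cdot\Gamma$ stays strictly below $\alpha_0$ uniformly as $g\to 1^-$, using the fact that both $\widetilde S$ and the remainder vanish in the right rate. The seven linear stripes, by comparison, should reduce to straightforward one-variable estimates once the bound $\Gamma$ is in hand.
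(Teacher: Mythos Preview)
Your overall architecture coincides with the paper's: treat the eight regions separately, bound $\beta$ and $\gamma$ independently, and verify $\beta\cdot\gamma<\alpha_0$. For the seven linear stripes the paper does exactly what you outline, reducing (Theorem~\ref{thm-lineal}) to the single-variable inequality
\[
  \frac{|a-\asinh(x)|}{\sqrt{1+x^2}\,(\sqrt{1+x^2}-1)^2}\,\max\bigl\{1,x,2x(\sqrt{1+x^2}-1)\bigr\}<\alpha_0,
  \qquad x\in[\widetilde S_{\min}(a),\widetilde S_{\max}(a)],
\]
with the tabulated endpoints chosen so this holds (Corollary~\ref{cor-stripes}).

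The one substantive difference is your $\gamma$ bound. The paper does \emph{not} invoke Cauchy's estimates; it proves by induction (Lemma~\ref{lemma_asinh}) the exact formula $\asinh^{(k)}(x)=(1+x^2)^{1/2-k}P_k(x)$, where $P_k$ has degree $k-1$, alternating coefficients, $\|P_k\|_1=(2k-3)!!$, and $P_{2n}(0)=0$. From this it extracts (Lemma~\ref{lemma_gamma}) a bound of the form
\[
  \gamma(f_{g,L},\widetilde S)\le\frac{1}{1+\widetilde S^2}\max\Bigl\{a_2,\ldots,a_{n-1},\,2\max\{1,|\widetilde S|\}\max\bigl\{1,\bigl(\tfrac{g}{n(\sqrt{1+\widetilde S^2}-g)}\bigr)^{\frac1{n-1}}\bigr\}\Bigr\},
\]
which keeps the first few derivative terms exactly. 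Cauchy's inequality would certainly give \emph{a} bound, but a cruder one: it is blind to the vanishing of the even Taylor coefficients at $0$, and it drags in $M(r)=\max_{|z-\widetilde S|=r}|\asinh(z)|$. Since the final check in the paper is that a specific function of $\widetilde S$ stays below $\alpha_0\approx0.1716$ on $[0,1]$ with little room to spare, looser constants could easily break the inequality; you would have to redo the numerics to see whether your $\Gamma$ is sharp enough.

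For the cubic region the paper is close to your sketch but more exact: the residual is not merely $O(g\widetilde S^5)$ but \emph{equal} to $g\bigl(\asinh(\widetilde S)-\widetilde S+\tfrac16\widetilde S^3\bigr)$, and the trick is to replace $g$ by $1$ simultaneously in the numerator of $\beta$ and in the factor $\sqrt{1+\widetilde S^2}-g$ appearing in both $\beta$ and $\gamma$, collapsing everything to a function of $\widetilde S$ alone. One small slip: the range of $\widetilde S$ on this region is $[0,1]$, not $[0,\tfrac56]$ (plug $\widetilde S=1$ into the cubic and recover $L=1-\tfrac56 g$).
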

\begin{figure}[!ht]
\centering
\includegraphics[height=5cm]{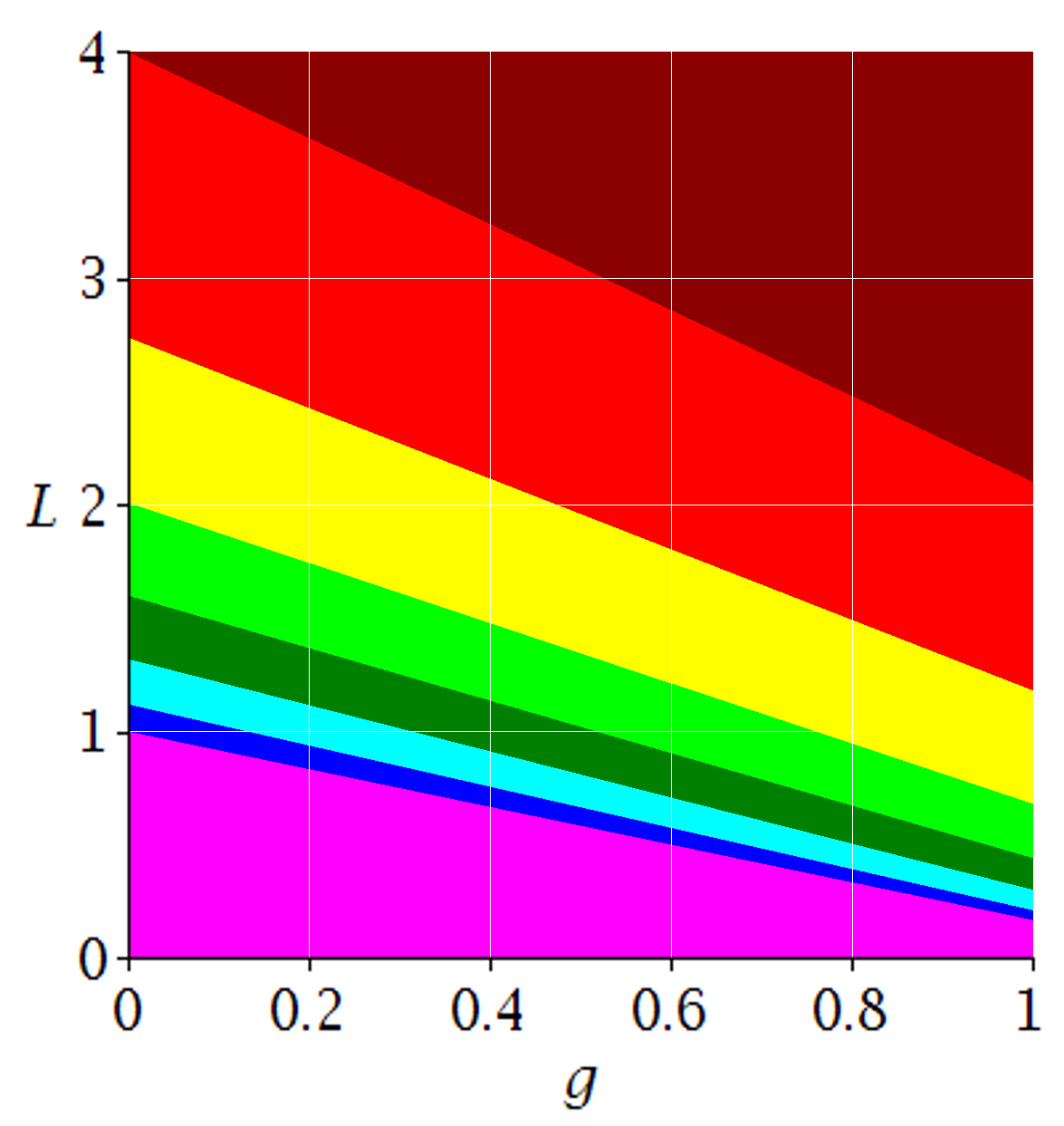}
\caption{The regions of Theorem \ref{thm-starterKeplerS} are represented in different colors. The uppermost region (in dark red) is actually unbounded}
\end{figure}

\begin{remark}
The exact solution $\widetilde{S}$ of the cubic equation $(1-g)\widetilde{S}+g\frac{\widetilde{S}^3}6 =L$ is given by
\[
\widetilde{S}(g,L) = \sqrt[3]{\frac{3L}{g}+\sqrt{\frac{9L^2}{g^2}+\frac{8(1-g)^3}{g^3}}} +\sqrt[3]{\frac{3L}{g}-\sqrt{\frac{9L^2}{g^2}+\frac{8(1-g)^3}{g^3}}},
\]
which can also be written as
\[
\widetilde{S}(g,L) = A-\frac{2(1-g)}{gA},
\]
where $A=\sqrt[3]{\frac{3L}{g}+\sqrt{\frac{9L^2}{g^2}+\frac{8(1-g)^3}{g^3}}} $.
\end{remark}

For any bounded region of $(0,1) \times [0,\infty)$ that excludes a neighborhood of $g=1, L=0$, we give in Section \ref{sec-const} an explicit construction of another piecewise-defined starter using only constants. This is important for building look-up table solutions of the hyperbolic Kepler's equation. Finally, we show that such a construction is impossible if a neighborhood of the corner is included.

\begin{thm}\label{thm-const-cubre}
For any $0<\varepsilon<\frac14$ and $L_{\max}>\varepsilon'= \frac{\sqrt{3}\alpha_0\varepsilon^\frac32}{(1-\varepsilon)^\frac12}$, there is a piecewise constant function $\widetilde{S}$ defined in $R=(0,1)\times [0,L_{\max}] \setminus (1-\varepsilon,1) \times [0,\varepsilon']$ that is an approximate zero of $f_{g,L}$.
\end{thm}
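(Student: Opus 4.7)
The plan is to prove existence by a compactness argument, constructing the piecewise constant approximation from a finite open cover of a compact superset of $R$. First, I would enlarge $R$ to a genuinely compact set
\[
R' = \bigl([0,1-\varepsilon]\times[0,L_{\max}]\bigr)\cup\bigl([0,1]\times[\varepsilon',L_{\max}]\bigr),
\]
and check directly that $R\subseteq R'$: any $(g,L)\in R$ with $g>1-\varepsilon$ must, by definition of $R$, satisfy $L>\varepsilon'$ and hence lie in the second piece. The point of avoiding the corner $(g,L)=(1,0)$ is that $f'_{1,L}(S)=1-1/\sqrt{1+S^2}$ vanishes at $S=0=S(1,0)$, making Smale's $\alpha$ undefined there, whereas at every other point of $R'$ the derivative $f'_{g,L}(S(g,L))$ is strictly positive.

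At each $(g_0,L_0)\in R'$ I would take as trial starter the exact solution $c_0=S(g_0,L_0)$; since $\beta(f_{g_0,L_0},c_0)=0$, this gives $\alpha=0<\alpha_0$. The crucial local stability claim is that for this fixed $c_0$, the function $(g,L)\mapsto\alpha(f_{g,L},c_0)$ stays below $\alpha_0$ on some open neighborhood of $(g_0,L_0)$. Continuity of $f$ and $f'$ yields $\beta(f_{g,L},c_0)\to 0$, using that $f'_{g_0,L_0}(c_0)\neq 0$ (which holds since $c_0>0$ whenever $g_0=1$, thanks to $L_0\geq\varepsilon'>0$). To bound $\gamma$ uniformly, I would exploit that $\asinh$ is holomorphic in the strip $|\operatorname{Im} z|<1$: Cauchy's estimates on a disk of radius $r=\tfrac12\sqrt{1+c_0^2}$ around $c_0$ provide a bound $|\asinh^{(k)}(c_0)|/k!\leq C(c_0)\,r^{-k}$, which makes $\gamma(f_{g,L},c_0)$ uniformly bounded on a small neighborhood of $(g_0,L_0)$. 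Consequently $\alpha(f_{g,L},c_0)\to 0$, and there exists an open $U_{(g_0,L_0)}\ni(g_0,L_0)$ on which $\alpha<\alpha_0$.

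By compactness of $R'$, the cover $\{U_{(g_0,L_0)}\}$ admits a finite subcover $U_1,\dots,U_N$ with associated constants $c_1,\dots,c_N$, and the function
\[
\widetilde{S}(g,L)=c_i\quad\text{for}\quad(g,L)\in(U_i\cap R)\setminus(U_1\cup\cdots\cup U_{i-1})
\]
is a piecewise constant approximate zero on $R$. The main obstacle is the uniform bound on $\gamma$ at points with $g_0$ near $1$ and $L_0$ near $\varepsilon'$, where $c_0$ is small and $f'_{g,L}(c_0)$ nearly vanishes: the estimate must be sharp enough to isolate the precise threshold $\varepsilon'=\sqrt{3}\alpha_0\varepsilon^{3/2}/(1-\varepsilon)^{1/2}$. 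I would expect this formula to emerge from balancing the estimates for $\beta$ and $\gamma$ along the cusp $g\to 1$, $L=\varepsilon'$, so that a fully explicit construction (rather than pure compactness) recovers it and thereby makes the piecewise constants computable.
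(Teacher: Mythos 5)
Your compactness argument is correct and establishes the theorem, but it is genuinely different in spirit from the paper's proof, which is fully explicit. You enlarge $R$ to a compact $R'$, use the exact solution $c_0=S(g_0,L_0)$ as a local trial constant (valid since $f'_{g_0,L_0}(c_0)>0$ everywhere on $R'$ once $(1,0)$ is excised), and extract a finite subcover. The local step — that $\beta\to 0$ by continuity while $\gamma$ stays bounded via Cauchy estimates for $\asinh$ on a disk of radius $\frac12\sqrt{1+c_0^2}$ (safely inside the domain of holomorphy, whose nearest singularities $\pm i$ lie at distance $\sqrt{1+c_0^2}$) — is sound, and $\alpha=\beta\gamma\to 0<\alpha_0$ supplies the required open neighborhood. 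The paper instead handles the two pieces of $R$ directly: the strip $A=(0,1-\varepsilon)\times[0,\varepsilon']$ is shown (Remark~\ref{remark-const}) to sit inside $R_3(0)$, which is exactly where the formula $\varepsilon'=\sqrt{3}\alpha_0\varepsilon^{3/2}/(1-\varepsilon)^{1/2}$ comes from by evaluating the boundary of $R_3(0)$ at $g=1-\varepsilon$; and the block $B$ is covered by an explicit finite chain of stripes $R_3(S_i)$, where $S_0=\varepsilon'$, $S_{i+1}=S_i+2\Delta(S_i)$, with overlap guaranteed by the monotonicity of $\Delta$ (Lemma~\ref{lem-delta}) and termination by $S_i\to\infty$. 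What the paper's route buys is computability: the constants $S_i$ and the regions on which they are valid are given by closed formulas, which is the whole point for a look-up-table implementation; your route buys generality (it works verbatim for any compact set in $[0,1]\times[0,\infty)$ avoiding $(1,0)$, so the particular threshold $\varepsilon'$ in the statement plays no role), at the cost of giving no handle on $N$ or on the $c_i$. You correctly flag this trade-off at the end of your write-up.
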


\begin{thm}\label{thm-const2}
Let $U \subseteq (0,1) \times [0,\infty)$ be an open set such that $\bar{U} \supseteq \{ 1 \} \times [0,\varepsilon]$ for some $\varepsilon>0$. For any constant starter $S_0\geq0$, there exists a point $(g_0,L_0) \in U$ such that $S_0$ is not an approximate zero of $f_{g_0,L_0}$.
\end{thm}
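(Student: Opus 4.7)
For any $S_0\geq 0$ I will locate $(g_0,L_0)\in U$ close to the segment $\{1\}\times[0,\varepsilon]\subseteq\overline U$ at which Smale's $\alpha$-test fails. The idea is to take limits of $\beta$ and $\gamma$ as $(g,L)$ approaches a well-chosen point of the segment and then to invoke continuity to pass to a nearby $(g_0,L_0)\in U$. Two regimes of $S_0$ require different target points.

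\textbf{Case $S_0=0$.} Evaluation at $S=0$ gives $f_{g,L}(0)=-L$, $f_{g,L}'(0)=1-g$, $f_{g,L}''(0)=0$ and $f_{g,L}'''(0)=g$, so $\beta(f_{g,L},0)=L/(1-g)$ and (from the $k=3$ term alone) $\gamma(f_{g,L},0)\geq(g/(6(1-g)))^{1/2}$. Pick any $L^*\in(0,\varepsilon]$; since $(1,L^*)\in\overline U$, there is a sequence $(g_n,L_n)\in U$ with $(g_n,L_n)\to(1,L^*)$, and $\alpha(f_{g_n,L_n},0)\geq L_n g_n^{1/2}/\bigl(\sqrt 6(1-g_n)^{3/2}\bigr)\to\infty$, so any sufficiently late term does the job.

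\textbf{Case $S_0>0$.} Here I would target $(1,0)\in\overline U$. Set $v=\sqrt{1+S_0^2}$. Two lower bounds on $\gamma$ hold uniformly in $(g,L)$: (a) the $k=2$ term gives $\gamma\geq gS_0/(2v^2(v-g))$; and (b) since $\asinh$ has branch points only at $\pm i$, at distance $v$ from $S_0$, the radius of convergence of the Taylor series of $f_{g,L}$ around $S_0$ equals $v$, and Hadamard's formula yields $\gamma\geq 1/v$. Multiplication by $\beta$ and evaluation at $(g,L)=(1,0)$ produces the two lower bounds $\alpha_{\mathrm{II}}(S_0):=S_0(S_0-\asinh(S_0))/[2v(v-1)^2]$ and $h(S_0):=(S_0-\asinh(S_0))/(v-1)$ on $\alpha(f_{1,0},S_0)$, with $\alpha_{\mathrm{II}}(0^+)=1/3$ and $h(\infty)=1$. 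The key lemma is $\max(\alpha_{\mathrm{II}},h)>\alpha_0$ on $(0,\infty)$: differentiating $h$ reduces its monotonicity to $S_0\asinh(S_0)>2(v-1)$, itself provable by comparing derivatives at $0$; an analogous argument yields monotonicity for $\alpha_{\mathrm{II}}$; a numerical check at a single splitting point (e.g.~$S_0=1$) that both $\alpha_{\mathrm{II}}(1)$ and $h(1)$ exceed $\alpha_0\approx 0.172$ closes the argument.

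With the key lemma in hand, $\alpha(f_{g,L},S_0)$ is bounded below by the product of $\beta(f_{g,L},S_0)$ (continuous in $(g,L)$ and positive at $(1,0)$ since $S_0>0$) and each of the explicit lower bounds on $\gamma$ above---both continuous in $(g,L)$---so the inequality $\alpha(f_{g,L},S_0)>\alpha_0$ holds on a neighbourhood of $(1,0)$, which intersects $U$ because $(1,0)\in\overline U$. The main obstacle is the arithmetic in the key lemma, in particular proving the monotonicities and verifying that the two regimes together dominate $\alpha_0$; if the single-split at $S_0=1$ is too tight, one can bridge with an additional $k=3$ lower bound on $\gamma$.
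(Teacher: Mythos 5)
Your approach is genuinely different from the paper's, and in one key respect substantially cleaner. The observation that $\gamma(f_{g,L},S_0)\geq 1/\sqrt{1+S_0^2}$ via the radius of convergence of $\asinh$ around $S_0$ (the branch points sit at $\pm i$, hence at distance $v=\sqrt{1+S_0^2}$) and the Hadamard formula is correct: since $\limsup_k|a_k|^{1/k}=1/v$ for $a_k=f^{(k)}(S_0)/k!$, one gets $\limsup_k|a_k/a_1|^{1/(k-1)}\geq 1/v$ as well, so the sup defining $\gamma$ is $\geq 1/v$. Multiplying by $\beta(f_{1,0},S_0)=v(S_0-\asinh(S_0))/(v-1)$ gives $h(S_0)=(S_0-\asinh(S_0))/(v-1)\to 1$ as $S_0\to\infty$, which in a single stroke replaces the paper's three separate cases $1.598\leq S_0<3.1$, $3.1\leq S_0<9.62$ and $S_0\geq 9.62$ (which work with the $k=3$, $k=4$ and general $k$ terms respectively and require laborious numerical verification in each). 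Your $\alpha_{\mathrm{II}}$ is exactly the expression the paper uses in its first case. The $S_0=0$ case and the continuity/closure argument at the end are fine and match the paper's strategy of passing to the boundary.

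The one genuine gap is the sentence ``an analogous argument yields monotonicity for $\alpha_{\mathrm{II}}$.'' It is not analogous. Differentiating $h$ leads to the clean inequality $S_0\asinh(S_0)>2(v-1)$, proved by comparing derivatives; differentiating $\alpha_{\mathrm{II}}(S_0)=\frac{(S_0-\asinh(S_0))(v+1)^2}{2vS_0^3}$ leads (after simplification using $(v-1)(v+1)=S_0^2$) to
\[
S_0^2\bigl(3S_0-2\asinh(S_0)\bigr)\;<\;\frac{v+1}{v}\,(4S_0^2+3)\,\bigl(S_0-\asinh(S_0)\bigr),
\]
which is a genuinely messier inequality and not settled by a derivative comparison at the origin. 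Left as stated, the key lemma $\max(\alpha_{\mathrm{II}},h)>\alpha_0$ on $(0,\infty)$ is unproved on $(0,1]$. The fix is easy, though, and does not need monotonicity: the Taylor series of $\asinh$ alternates, so $S_0-\asinh(S_0)\geq \tfrac{S_0^3}{6}-\tfrac{3S_0^5}{40}$ for $S_0\in(0,1]$, and $(v+1)^2/v\geq 4$, hence $\alpha_{\mathrm{II}}(S_0)\geq \tfrac13-\tfrac{3S_0^2}{20}\geq\tfrac{11}{60}\approx 0.183>\alpha_0$ on $(0,1]$. Combined with $h$ strictly increasing and $h(1)\approx 0.286>\alpha_0$, this closes the lemma. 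With that repair, your argument is a legitimate and noticeably shorter alternative to the paper's proof.
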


\section{Starters for the hyperbolic Kepler's equation}\label{sec-starter}

The aim of this section is to prove Theorem \ref{thm-starterKeplerS}. To do that, we first prove some necessary technical results. Then we show in Theorem \ref{thm-cubica-ng} that the solution of the cubic equation  $(1-g)\widetilde{S}+g\frac{\widetilde{S}^3}6 =L$ is an approximate zero when $0\leq L\leq 1-\frac56 g$. Finally, we study in Theorem \ref{thm-lineal} the family of starters $\widetilde{S}=L+ag$, and in Corollary \ref{cor-stripes} those with $a \in \{0.91, 1.02, 1.16, 1.33, 1.56, 1.90, 2.30 \}$.

\begin{lem}\label{lemma_asinh}
For all $k\geq1$, we have
\[
\frac{d^k}{dx^k}(\asinh (x))=(1+x^2)^{\frac12-k} P_k(x),
\]
where $P_k (x)$ are polynomials of degree $k-1$. The monomials of $P_k(x)$ have
all even or odd degree:
\[
  P_k(x)=a^{(k)}_{k-1}x^{k-1}+a^{(k)}_{k-3}x^{k-3}+a^{(k)}_{k-5}x^{k-5}+\cdots
\]
The leading coefficient is $(-1)^{k-1} (k-1)!$, the coefficients alternate signs, and the sum of
the absolute value of all the coefficients is $||P_k (x) ||_1 =1\cdot 3 \cdot 5\cdots (2k-3)=(2k-3)!!$
The independent term satisfies
\[
|P_k(0)|=
\begin{cases}
 ((k-2)!!)^2 & \text{ if $k$ is odd,}\\
 0           & \text{ if $k$ is even.}\\
\end{cases}
\]
\end{lem}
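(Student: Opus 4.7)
The proof is by induction on $k$, with the statement and all its addenda falling out of a single differentiation recurrence. The base case $k=1$ is $\asinh'(x)=(1+x^2)^{-1/2}$, so $P_1(x)=1$. For the inductive step, differentiating $(1+x^2)^{1/2-k}P_k(x)$ gives
\[
\asinh^{(k+1)}(x)=(1+x^2)^{-1/2-k}\bigl[(1-2k)\,xP_k(x)+(1+x^2)\,P_k'(x)\bigr],
\]
so $P_{k+1}(x)=(1-2k)\,xP_k(x)+(1+x^2)\,P_k'(x)$. From this one reads off immediately that $P_{k+1}$ is a polynomial of degree $k$ with all monomial degrees congruent to $k\pmod 2$ (both summands are of that parity thanks to the inductive parity of $P_k$), and that its leading coefficient is $\bigl((1-2k)+(k-1)\bigr)a^{(k)}_{k-1}=-k\,a^{(k)}_{k-1}$. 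Iterating from $a^{(1)}_0=1$ then gives the claimed leading coefficient $(-1)^{k-1}(k-1)!$.

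For the alternating-sign and $\ell^1$-norm statements, I would pass to the coefficient-level recurrence
\[
a^{(k+1)}_j=-(2k-j)\,a^{(k)}_{j-1}+(j+1)\,a^{(k)}_{j+1},
\]
the sign being explicit because $j\le k<2k$. By the inductive alternation in $P_k$, the two nonzero neighbors $a^{(k)}_{j-1}$ and $a^{(k)}_{j+1}$ have opposite signs, so $-a^{(k)}_{j-1}$ and $a^{(k)}_{j+1}$ agree in sign. This simultaneously (i) preserves alternation in $P_{k+1}$, since shifting $j$ by $2$ flips the sign of the governing neighbor in $P_k$, and (ii) yields the clean identity $|a^{(k+1)}_j|=(2k-j)|a^{(k)}_{j-1}|+(j+1)|a^{(k)}_{j+1}|$. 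Summing over valid $j$ and reindexing the two pieces (set $i=j-1$ in the first, $i=j+1$ in the second) collapses the right-hand side to $(2k-1)\sum_i|a^{(k)}_i|$, so $\|P_{k+1}\|_1=(2k-1)\|P_k\|_1$, and the double factorial $\|P_k\|_1=(2k-3)!!$ follows.

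For the independent term, the case of even $k$ is immediate from the parity of the monomials: $P_k(0)=0$. For odd $k=2n+1$, the cleanest route is to observe that $P_k(0)=\asinh^{(k)}(0)$ (the prefactor $(1+x^2)^{k-1/2}$ equals $1$ at $x=0$) and to extract the coefficient from the classical Maclaurin series $\asinh(x)=\sum_{n\ge0}\tfrac{(-1)^n(2n)!}{4^n(n!)^2(2n+1)}x^{2n+1}$, giving $P_k(0)=(-1)^n\bigl((2n-1)!!\bigr)^2$ and hence $|P_k(0)|=((k-2)!!)^2$. The main obstacle is not any hard analytic step but the bookkeeping: alternation and the $\ell^1$-norm must be carried as a single inductive package so that absolute values can be pulled through the coefficient recurrence without sign cancellation, and invoking the Maclaurin expansion lets one sidestep a separate messier induction that would otherwise require tracking $a^{(k)}_2$ alongside $a^{(k)}_0$.
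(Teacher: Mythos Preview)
Your proof is correct and follows the same inductive skeleton as the paper's: the same recurrence $P_{k+1}(x)=(1-2k)xP_k(x)+(1+x^2)P_k'(x)$, the same coefficient-level sign analysis for alternation, and the same appeal to the Maclaurin series of $\asinh$ for $P_k(0)$. The only notable difference is in the $\ell^1$-norm step: you sum the absolute-value coefficient identity and reindex to collapse to $(2k-1)\|P_k\|_1$, whereas the paper observes that for a polynomial with alternating signs and monomials of a single parity one has $\|P\|_1=|P(i)|$, and then evaluates the polynomial recurrence at $x=i$ (where $1+i^2=0$ annihilates the $(1+x^2)P_k'$ term) to get $\|P_{k+1}\|_1=|(1-2k)iP_k(i)|=(2k-1)\|P_k\|_1$ in one line---a slicker trick, though your direct summation is equally valid and avoids complex arithmetic.
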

\begin{proof}
We proceed by induction. Note that $\asinh'(x)=(1+x^2)^{-\frac12}$, so the Lemma holds for $k=1$ by setting
$P_1(x)=1$. Assume now that the Lemma is true for some $k\geq 1$. Differentiating the $k$-th derivative, we get
\[
  \frac{d^{k+1}}{dx^{k+1}}\asinh(x)=(1+x^2)^{\frac12-(k+1)}\left[ (1-2k)xP_k(x)+(1+x^2)P'_k(x)\right],
\]
so we define $P_{k+1}(x)=(1-2k)xP_k(x)+(1+x^2)P'_k(x)$. This shows that $P_{k+1}(x)$ is a polynomial of degree
at most $k$. Moreover, if $P_k(x)$ has only terms of odd (or even) degree, then $P_{k+1}(x)$ has only terms of
even (or odd) degree, respectively. A more careful study of the leading coefficient of $P_{k+1}(x)$ from the
recurrence gives
\[
  a_{k}^{(k+1)}=(1-2k)a_{k-1}^{(k)}+(k-1)a_{k-1}^{(k)}=(-k)a_{k-1}^{(k)}=(-k)(-1)^{k-1}(k-1)!=(-1)^kk!,
\]
showing that the degree of $P_{k+1}(x)$ is $k$ and that the leading coefficient of $P_{k+1}(x)$ is
$(-1)^kk!$, as we needed for the inductive step. The other coefficients of $P_{k+1}(x)$ can be also found
by using the recurrence:
\[
  a_{k-2r}^{(k+1)}=(-k-2r)a_{k-2r-1}^{(k)}+(k-2r+1)a_{k-2r+1}^{(k)}
\]
for $r=1,\ldots,\left[\frac{k}2\right]$. Since $-k-2r<0$, $k-2r+1>0$, and the signs of $a_{k-2r-1}^{(k)}$ and
$a_{k-2r+1}^{(k)}$ are different, we have ${\rm sgn}(a_{k-2r}^{(k+1)})={\rm sgn}(a_{k-2r+1}^{(k)})$, which
alternate by the inductive hypothesis. Finally, for a polynomial with coefficients of alternating signs and
all even (or odd) exponents, we have $||P_{k+1}(x)||_1=|P_{k+1}(i)|=|(1-2k)iP_k(i)|=(2k-1)||P_k(x)||_1=(2k-1)!!$

According to~\cite{Jeffrey}, the power series expansion of $\asinh(x)$ at $x=0$ is
\[
  \asinh(x)=\sum_{n\geq0}\frac{(-1)^n(2n-1)!!}{(2n+1)(2n)!!}x^{2n+1} =\sum_{k\geq0} \frac{P_k(0)}{k!} x^k,
\]
so $P_{2n}(0)=0$, and $P_{2n+1}(0)=\frac{(-1)^n(2n-1)!!(2n+1)!}{(2n+1)(2n)!!}=(-1)^n(2n-1)!!^2$.
\end{proof}

\begin{lem}\label{lemma_sup}
Given a real number $x\neq0$ and an integer $n\geq2$, then
\[
\sup_{k\geq n} \left|\frac{x}{k} \right|^{\frac{1}{k-1}} =\max \left\{ 1, \left|\frac{x}{n} \right|^{\frac{1}{n-1}}  \right\}.
\]
\end{lem}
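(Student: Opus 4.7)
Set $a_k=|x/k|^{1/(k-1)}$ for $k\geq n$. Since $\log a_k=(\log|x|-\log k)/(k-1)\to0$ as $k\to\infty$, the sequence $a_k$ tends to $1$. Note also that $a_k\leq 1$ precisely when $|x|\leq k$. The proof splits into two cases depending on whether $|x|\leq n$ or $|x|>n$.

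\textbf{Case 1: $|x|\leq n$.} Then $|x|\leq n\leq k$ for every $k\geq n$, so $a_k\leq 1$. Combined with $a_k\to 1$, this yields $\sup_{k\geq n}a_k=1$. On the other hand $a_n\leq 1$, so $\max\{1,a_n\}=1$, and the claimed identity holds.

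\textbf{Case 2: $|x|>n$.} Here $a_n>1$, so $\max\{1,a_n\}=a_n$, and it suffices to show $a_k\leq a_n$ for every $k\geq n$. If $k\geq|x|$ this is immediate since $a_k\leq 1<a_n$. For $n\leq k<|x|$, I would pass to the continuous function $h(t)=(\log|x|-\log t)/(t-1)$ on $(1,\infty)$, so $a_k=e^{h(k)}$, and prove that $h$ is strictly decreasing on the interval $(1,|x|]$. This is the main computational step: a direct differentiation gives
\[
h'(t)=\frac{\phi(t)}{(t-1)^2},\qquad \phi(t)=-1+\tfrac{1}{t}-\log|x|+\log t,
\]
and $\phi'(t)=(t-1)/t^2>0$, so $\phi$ is increasing on $(1,\infty)$. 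Evaluating at the endpoint gives $\phi(|x|)=-1+1/|x|<0$ (using $|x|>n\geq 2>1$), hence $\phi(t)<0$ throughout $(1,|x|]$. Therefore $h$ is strictly decreasing on this interval, and in particular $h(k)\leq h(n)$, i.e. $a_k\leq a_n$, for every $n\leq k\leq|x|$.

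Combining both subcases of Case 2 yields $\sup_{k\geq n}a_k=a_n=\max\{1,a_n\}$, completing the proof. The only nontrivial step is the sign analysis of $\phi(t)$ leading to the monotonicity of $h$; everything else is bookkeeping.
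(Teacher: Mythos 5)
Your proof is correct, and it arrives at the same case split (by whether $|x|\leq n$), but Case~2 is handled by a genuinely different mechanism. The paper avoids calculus entirely: for $|x|>n$ it bounds $|x/k|^{1/(k-1)}\leq |x/n|^{1/(k-1)}\leq |x/n|^{1/(n-1)}$, the first step using $|x/k|\leq|x/n|$ and the second using that, once the base $|x|/n$ exceeds $1$, the map $k\mapsto|x/n|^{1/(k-1)}$ is decreasing. You instead introduce the continuous function $h(t)=(\log|x|-\log t)/(t-1)$, differentiate, and perform a sign analysis on $\phi(t)=-1+1/t-\log|x|+\log t$ (using $\phi'>0$ and $\phi(|x|)<0$) to get monotonicity of $h$ on $(1,|x|]$, with a further subcase split at $k\geq|x|$. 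Both are valid; the paper's two-inequality chain is shorter and purely algebraic, whereas your argument is self-contained and a bit more transparent about exactly where the sequence $a_k$ is monotone, at the cost of a derivative computation and an extra subcase that the paper's bound sidesteps.
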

\begin{proof}
Assume first that $|x|\leq n$. Then $\max \{ 1,\left|\frac{x}{n} \right|^{\frac{1}{n-1}} \}=1$.
Moreover,
\[
\sup_{k\geq n} \left|\frac{x}{k} \right|^{\frac{1}{k-1}} \leq \sup_{k\geq n} \left|\frac{x}{n} \right|^{\frac{1}{k-1}} \leq 1
\]
and
\[
\sup_{k\geq n} \left|\frac{x}{k} \right|^{\frac{1}{k-1}} \geq \lim_{k \to \infty} \left|\frac{x}{k} \right|^{\frac{1}{k-1}}=
e^{\Lim{k \to \infty}\frac{\log|x|-\log|k|}{k-1} }=e^0=1.
\]
Therefore,
\[
\sup_{k\geq n} \left|\frac{x}{k} \right|^{\frac{1}{k-1}}=1=\max \left\{ 1, \left|\frac{x}{n} \right|^{\frac{1}{n-1}}  \right\}.
\]

Now consider the case $|x| >n$. Then $\max \left\{ 1, \left|\frac{x}{n} \right|^{\frac{1}{n-1}}  \right\}=\left|\frac{x}{n} \right|^{\frac{1}{n-1}}$. Moreover,
\[
\sup_{k\geq n} \left|\frac{x}{k} \right|^{\frac{1}{k-1}} \geq \left|\frac{x}{n} \right|^{\frac{1}{n-1}}
\]
since the supremum is greater than or equal to the first term, and
\[
\sup_{k\geq n} \left|\frac{x}{k} \right|^{\frac{1}{k-1}} \leq \sup_{k\geq n} \left|\frac{x}{n} \right|^{\frac{1}{k-1}}
\leq \left|\frac{x}{n} \right|^{\frac{1}{n-1}}
\]
since the sequence $\left|\frac{x}{n} \right|^{\frac{1}{n-1}} $ is decreasing.
Therefore,
\[
\sup_{k\geq n} \left|\frac{x}{k} \right|^{\frac{1}{k-1}} =
\left|\frac{x}{n} \right|^{\frac{1}{n-1}}= \max \left\{ 1, \left|\frac{x}{n} \right|^{\frac{1}{n-1}}  \right\},
\]
which concludes the proof.
\end{proof}

The previous two lemmas give us the following upper bound for $\gamma(f_{g,L},\widetilde{S})$.
\begin{lem}\label{lemma_gamma}
For all $n\geq2$, and $(g,L) \in (0,1) \times [0,\infty)$,
\[
\gamma(f_{g,L},\widetilde{S}) \leq
\frac{1}{1+\widetilde{S}^2} \max \left\{a_2, \ldots, a_{n-1},
2\max\{1,|\widetilde{S}|\} \max\left\{ 1,\left( \frac{g}{n(\sqrt{1+\widetilde{S}^2}-g)}\right)^{\frac{1}{n-1}} \right\} \right\},
\]
where
\[
a_k=\left( \frac{g|P_k(\widetilde{S})|}{k! (\sqrt{1+\widetilde{S}^2}-g)} \right)^{\frac{1}{k-1}}.
\]
\end{lem}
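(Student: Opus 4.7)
The plan is to compute $\gamma(f_{g,L},\widetilde{S})$ directly from Definition \ref{def} using the explicit form of the derivatives, then bound the tail of the supremum via Lemma \ref{lemma_asinh} and Lemma \ref{lemma_sup}.

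First I would write $f'_{g,L}(S)=1-\frac{g}{\sqrt{1+S^2}}=\frac{\sqrt{1+S^2}-g}{\sqrt{1+S^2}}$ and, for $k\geq2$, use Lemma \ref{lemma_asinh} to obtain $f^{(k)}_{g,L}(S)=-g(1+S^2)^{1/2-k}P_k(S)$. Plugging this into the formula for $\gamma$ gives, after the algebraic simplification $(1+\widetilde{S}^2)^{1/2-k}\cdot\sqrt{1+\widetilde{S}^2}=(1+\widetilde{S}^2)^{1-k}$, the exact identity
\[
\left|\frac{f^{(k)}_{g,L}(\widetilde{S})}{k!\,f'_{g,L}(\widetilde{S})}\right|^{\frac{1}{k-1}}=\frac{1}{1+\widetilde{S}^2}\left(\frac{g\,|P_k(\widetilde{S})|}{k!\,(\sqrt{1+\widetilde{S}^2}-g)}\right)^{\frac{1}{k-1}}=\frac{a_k}{1+\widetilde{S}^2},
\]
so that $\gamma(f_{g,L},\widetilde{S})=\frac{1}{1+\widetilde{S}^2}\sup_{k\geq2}a_k$. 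Splitting the supremum as the maximum of $\{a_2,\ldots,a_{n-1}\}$ and $\sup_{k\geq n}a_k$ reduces the lemma to showing
\[
\sup_{k\geq n}a_k\leq 2\max\{1,|\widetilde{S}|\}\max\left\{1,\left(\frac{g}{n(\sqrt{1+\widetilde{S}^2}-g)}\right)^{\frac{1}{n-1}}\right\}.
\]

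To obtain this tail bound I would use the two structural facts from Lemma \ref{lemma_asinh}: every monomial of $P_k$ has degree $\leq k-1$ and $\|P_k\|_1=(2k-3)!!$. This immediately gives $|P_k(\widetilde{S})|\leq(2k-3)!!\max\{1,|\widetilde{S}|\}^{k-1}$. Next, rewriting $(2k-3)!!=\frac{(2k-2)!}{2^{k-1}(k-1)!}$ and using $\binom{2k-2}{k-1}\leq 4^{k-1}$ yields the clean inequality $\frac{(2k-3)!!}{k!}\leq\frac{2^{k-1}}{k}$. Substituting these bounds into the definition of $a_k$ gives
\[
a_k\leq \max\{1,|\widetilde{S}|\}\left(\frac{g\cdot 2^{k-1}}{k(\sqrt{1+\widetilde{S}^2}-g)}\right)^{\frac{1}{k-1}}=2\max\{1,|\widetilde{S}|\}\left|\frac{x}{k}\right|^{\frac{1}{k-1}},
\]
where $x=\frac{g}{\sqrt{1+\widetilde{S}^2}-g}$.

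Finally, I would apply Lemma \ref{lemma_sup} to the sequence $\{|x/k|^{1/(k-1)}\}_{k\geq n}$ to turn $\sup_{k\geq n}|x/k|^{1/(k-1)}$ into $\max\{1,|x/n|^{1/(n-1)}\}$, and recombine with the $\frac{1}{1+\widetilde{S}^2}$ factor to obtain the stated bound. There is no real obstacle here; the only delicate point is the combinatorial estimate $(2k-3)!!/k!\leq 2^{k-1}/k$, which determines the constant $2$ in front of $\max\{1,|\widetilde{S}|\}$ and is what makes the tail bound uniform in $k$, thereby allowing the reduction to the closed-form expression at the single index $k=n$.
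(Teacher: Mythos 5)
Your proof is correct and follows essentially the same approach as the paper: express $\gamma(f_{g,L},\widetilde{S})$ exactly using Lemma~\ref{lemma_asinh}, split the supremum at $k=n$, bound the tail via $|P_k(\widetilde{S})|\leq(2k-3)!!\max\{1,|\widetilde{S}|\}^{k-1}$ together with the estimate $(2k-3)!!/k!\leq 2^{k-1}/k$, and finish with Lemma~\ref{lemma_sup}. The only cosmetic difference is in that combinatorial step: the paper simply uses $(2k-3)!!\leq(2k-2)!!=2^{k-1}(k-1)!$, while you derive the same bound through $\binom{2k-2}{k-1}\leq 4^{k-1}$; both are equivalent and correct.
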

\begin{proof}
By Lemma \ref{lemma_asinh},
\begin{align*}
\gamma(f_{g,L},\widetilde{S}) &=\sup_{k \geq 2} \left| \frac{f_{g,L}^{(k)}(\widetilde{S})}{k!f_{g,L}'(\widetilde{S})}\right|^{\frac{1}{k-1}}
=\sup_{k \geq 2} \left| \frac{g P_k (\widetilde{S}) (1+\widetilde{S}^2)^{\frac12-k}}{k! \left(1-\frac{g}{\sqrt{1+\widetilde{S}^2}}\right)}\right|^{\frac{1}{k-1}}\\
&=\sup_{k \geq 2} \left| \frac{g P_k (\widetilde{S}) (1+\widetilde{S}^2)^{1-k}}{k! \left( \sqrt{1+\widetilde{S}^2}-g\right) } \right|^{\frac{1}{k-1}}
=\frac{1}{1+\widetilde{S}^2}\sup_{k \geq 2} \left| \frac{g P_k (\widetilde{S})}{k! \left( \sqrt{1+\widetilde{S}^2}-g\right) } \right|^{\frac{1}{k-1}}\\
&=\frac{1}{1+\widetilde{S}^2} \max \left\{a_2,\ldots,a_{n-1}, \sup_{k \geq n} a_k \right\}.
\end{align*}
Again by Lemma \ref{lemma_asinh},
\begin{align*}
a_k& \leq \left| \frac{g \max\{1,|\widetilde{S}| \}^{k-1} \cdot (2k-3)!!}{k! \left( \sqrt{1+\widetilde{S}^2}-g\right) } \right|^{\frac{1}{k-1}}
\leq \max\{1,|\widetilde{S}| \} \left| \frac{g  \cdot (2k-2)!!}{k! \left( \sqrt{1+\widetilde{S}^2}-g\right) } \right|^{\frac{1}{k-1}}\\
&= \max\{1,|\widetilde{S}| \}  \left| \frac{g \cdot 2^{k-1}}{k\left(\sqrt{1+\widetilde{S}^2} -g\right)}   \right|^{\frac{1}{k-1}}
=2\max\{1,|\widetilde{S}| \}   \left| \frac{g }{k\left(\sqrt{1+\widetilde{S}^2} -g\right)}  \right|^{\frac{1}{k-1}}.
\end{align*}
By Lemma \ref{lemma_sup},
\begin{align*}
\sup_{k\geq n} a_k
&\leq 2\max\{1,|\widetilde{S}| \} \sup_{k\geq n} \left| \frac{g }{k\left(\sqrt{1+\widetilde{S}^2} -g\right)}  \right|^{\frac{1}{k-1}}\\
&=2\max\{1,|\widetilde{S}| \} \max \left\{ 1, \left( \frac{g }{n \left(\sqrt{1+\widetilde{S}^2} -g\right)}  \right)^{\frac{1}{n-1}} \right\}.
\end{align*}
\end{proof}


\begin{thm}\label{thm-cubica-ng}
The exact solution $\widetilde{S}$ of the cubic equation $(1-g)\widetilde{S}+g\frac{\widetilde{S}^3}6 =L$ is an approximate zero of $f_{g,L}$ in the region $R_1$, where
\[
R_{1} =\left\{ 0 < g < 1, \; 0 \leq L \leq 1-\frac{5}{6}g  \right\}.
\]
\end{thm}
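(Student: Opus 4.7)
I verify Smale's $\alpha$-test from Definition \ref{def} directly: it suffices to show $\beta(f_{g,L},\widetilde{S})\cdot\gamma(f_{g,L},\widetilde{S})<\alpha_0$ throughout $R_1$. The pivotal identity, obtained by substituting the defining relation $L=(1-g)\widetilde{S}+g\widetilde{S}^3/6$ into $f_{g,L}$, is
\[
f_{g,L}(\widetilde{S}) \;=\; g\Bigl[\widetilde{S}-\asinh(\widetilde{S})-\tfrac{\widetilde{S}^3}{6}\Bigr],
\]
so $f_{g,L}(\widetilde{S})$ equals $g$ times the Taylor remainder of $\asinh$ past its cubic term. As a preliminary step, I would check that $\widetilde{S}\in[0,1]$ on $R_1$: the cubic map $S\mapsto(1-g)S+gS^3/6$ is strictly increasing on $[0,\infty)$ (its derivative $1-g+gS^2/2$ is positive), sends $0\mapsto 0$ and $1\mapsto 1-\tfrac{5}{6}g$, so the constraint $0\leq L\leq 1-\tfrac{5}{6}g$ forces $\widetilde{S}\in[0,1]$. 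This confinement is what makes the subsequent estimates clean, because it reduces $\max\{1,|\widetilde{S}|\}$ to $1$ in Lemma \ref{lemma_gamma} and places us squarely within the range of validity of the alternating-series estimate for $\asinh$.

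I would then bound $\beta$. By Lemma \ref{lemma_asinh}, the remainder expands as
\[
\widetilde{S}-\asinh(\widetilde{S})-\tfrac{\widetilde{S}^3}{6} \;=\; \sum_{n\geq 2}\frac{(-1)^{n+1}(2n-1)!!}{(2n+1)(2n)!!}\,\widetilde{S}^{2n+1},
\]
and a quick ratio-test calculation shows that consecutive terms decrease in absolute value for $|\widetilde{S}|\leq 1$. The alternating-series estimate therefore yields $|f_{g,L}(\widetilde{S})|\leq 3g\widetilde{S}^5/40$, and combining with $f'_{g,L}(\widetilde{S})=(\sqrt{1+\widetilde{S}^2}-g)/\sqrt{1+\widetilde{S}^2}$ gives an explicit upper bound on $\beta$ in terms of $(g,\widetilde{S})$ alone. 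For $\gamma$, I apply Lemma \ref{lemma_gamma} with a suitably small cutoff $n$ (most likely $n=3$ or $n=4$), using the explicit $P_2(x)=-x$ and $P_3(x)=2x^2-1$ together with $\max\{1,|\widetilde{S}|\}=1$ to obtain a bound that is again explicit in $(g,\widetilde{S})$.

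The main obstacle is verifying $\beta\gamma<\alpha_0$ uniformly. Since $L$ is recovered from $(g,\widetilde{S})$ by the cubic, the inequality reduces to an explicit two-variable estimate on $(0,1)\times[0,1]$. Because $\beta$ carries a factor of $\widetilde{S}^5$ while $\gamma$ includes a $1/(1+\widetilde{S}^2)$ damping, both tend to zero as $\widetilde{S}\to 0$; the tightest point is expected to be the corner $g\to 1^-$, $\widetilde{S}=1$, where $\sqrt{1+\widetilde{S}^2}-g$ attains its infimum $\sqrt{2}-1$. My strategy is to argue monotonicity of the upper bound separately in $g$ (fixing $\widetilde{S}$) and in $\widetilde{S}$ (fixing $g$), thereby reducing the problem to a single-variable inequality along the boundary $\widetilde{S}=1$, and then verify that even in the limit $g\to 1$ the product stays strictly below $\alpha_0=3-2\sqrt{2}$. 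If the simple bound at this corner is too tight, I would refine $|f_{g,L}(\widetilde{S})|$ to the two-term alternating estimate $|f_{g,L}(\widetilde{S})|\leq 3g\widetilde{S}^5/40 - 15g\widetilde{S}^7/336$, which suffices to recover the required slack.
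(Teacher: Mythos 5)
Your overall skeleton matches the paper's proof: establish $\widetilde{S}\in[0,1]$ via the cubic's monotonicity, write $f_{g,L}(\widetilde{S})=g\bigl(\widetilde{S}-\asinh(\widetilde{S})-\tfrac16\widetilde{S}^3\bigr)$, bound $\beta$ using $g<1$, and bound $\gamma$ via Lemma~\ref{lemma_gamma} with $n=3$ and $\max\{1,|\widetilde{S}|\}=1$. The paper then keeps the \emph{exact} quantity $\asinh(\widetilde{S})-\widetilde{S}+\tfrac16\widetilde{S}^3$ in the $\beta$-bound and reduces everything to a single-variable study of
\[
\frac{\asinh(\widetilde{S})-\widetilde{S}+\tfrac16\widetilde{S}^3}{\sqrt{1+\widetilde{S}^2}\,\bigl(\sqrt{1+\widetilde{S}^2}-1\bigr)}\,\max\Bigl\{2,\;\frac{2}{\sqrt{3}\,(\sqrt{1+\widetilde{S}^2}-1)^{1/2}}\Bigr\}
\]
on $[0,1]$. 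Where you diverge — and where the argument breaks — is in discarding the exact remainder in favor of the polynomial $3g\widetilde{S}^5/40$.

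That replacement is too lossy at the critical corner. At $\widetilde{S}=1$ the exact remainder is $\asinh(1)-1+\tfrac16\approx 0.0480$, and plugging it into the displayed expression gives $\approx 0.164<\alpha_0\approx 0.1716$, a margin of only a few percent. Your bound $3/40=0.075$ overshoots the remainder by more than $50\%$, producing $\approx 0.256>\alpha_0$, so the $\alpha$-test cannot be verified. Moreover, your proposed ``refinement'' $|f_{g,L}(\widetilde{S})|\leq 3g\widetilde{S}^5/40-15g\widetilde{S}^7/336$ is not an upper bound at all: for this alternating series (first term negative), the truncation after an even number of terms underestimates $|S|$, so $\tfrac{3}{40}\widetilde{S}^5-\tfrac{15}{336}\widetilde{S}^7$ is a \emph{lower} bound on the remainder's magnitude. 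The valid truncated upper bounds are the odd partial-sum magnitudes $|S_3|,|S_5|,\dots$, but the coefficients decay only like $n^{-3/2}$, so at $\widetilde{S}=1$ (the edge of the disc of convergence) one would need roughly twenty-five or more terms to push the bound below the required $\approx 0.050$. Polynomial truncation is thus the wrong tool here: you must keep the closed-form $\asinh$ expression, as the paper does, and finish with a direct one-variable analysis on $[0,1]$.
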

\begin{proof}
For any point $(g,L) \in R_1$, the solution $\widetilde{S}(g,L)$ of the cubic equation satisfies $\widetilde{S} \in [0,1]$. Therefore,
\begin{align*}
| f_{g,L}(\widetilde{S})|  &= \left| \widetilde{S} -g \asinh(\widetilde{S}) -L \right|
=\left| \left((1-g) \widetilde{S}+\frac{g}{6} \widetilde{S}^3 -L \right) -  g \left( \asinh(\widetilde{S}) -\widetilde{S}+\frac16 \widetilde{S}^3\right) \right|\\
&=g \left(  \asinh(\widetilde{S}) -\widetilde{S}+\frac16 \widetilde{S}^3 \right) \leq
\asinh(\widetilde{S}) -\widetilde{S}+\frac16 \widetilde{S}^3.
\end{align*}

Using the previous inequality in the definition of $\beta(f_{g,L},\widetilde{S})$ and the fact that $g<1$, we obtain
\[
\beta(f_{g,L},\widetilde{S})= \left| \frac{f(\widetilde{S})}{f'(\widetilde{S})}   \right| \leq
\frac{\asinh(\widetilde{S}) -\widetilde{S}+\frac16 \widetilde{S}^3}{1-\frac{1}{\sqrt{1+\widetilde{S}^2}}}=
\frac{\left(\asinh(\widetilde{S}) -\widetilde{S}+\frac16 \widetilde{S}^3 \right) \sqrt{1+\widetilde{S}^2}}{ \sqrt{1+\widetilde{S}^2} -1}.
\]

On the other hand, it follows from Lemma \ref{lemma_gamma} for $n=3$, $g<1$ and $|\widetilde{S}| \leq1$ that
\begin{align*}
\gamma(f_{g,L},\widetilde{S}) &\leq \frac{1}{1+\widetilde{S}^2} \max \left\{ \frac{\widetilde{S}}{2 (\sqrt{1+\widetilde{S}^2}-1 )},
2 \max \left\{ 1, \left( \frac{1}{3(\sqrt{1+\widetilde{S}^2}-1)} \right)^{\frac12} \right\} \right\}\\
&=\frac{1}{1+\widetilde{S}^2}  \max \left\{ 2, \frac{2}{\sqrt{3} (\sqrt{1+\widetilde{S}^2}-1)^{\frac12} } \right\},
\end{align*}
since $\frac{\widetilde{S}}{2 \left(\sqrt{1+\widetilde{S}^2}-1 \right)}  \leq \frac{2}{\sqrt{3} \left(\sqrt{1+\widetilde{S}^2}-1\right)^{\frac12} }$.

Multiplying the inequalities for $\beta(f_{g,L},\widetilde{S})$ and $\gamma(f_{g,L},\widetilde{S})$ together, we have that the $\alpha$-test is satisfied if
\[
\frac{\left(\asinh(\widetilde{S}) -\widetilde{S}+\frac16 \widetilde{S}^3 \right) }{\sqrt{1+\widetilde{S}^2} \left( \sqrt{1+\widetilde{S}^2} -1 \right)}
 \max \left\{ 2, \frac{2}{\sqrt{3} (\sqrt{1+\widetilde{S}^2}-1)^{\frac12} } \right\}
< \alpha_0.
\]
A standard analytic study of the previous one-variable function shows that it is bounded above by $\alpha_0$ in the interval $\widetilde{S} \in [0,1]$.
\end{proof}

\begin{thm}\label{thm-lineal}
The starter $\widetilde{S}(g,L)=L+a \,g$ is an approximate zero of $f_{g,L}$ in the stripe
\[
R_2 (a) =\{0< g < 1, \widetilde{S}_{\min}(a) \leq L +a g \leq \widetilde{S}_{\max} (a) \}
\]
if $\widetilde{S}_{\min}(a),\widetilde{S}_{\max}(a)$ are chosen such that
\[
\frac{ |a-\asinh(x)|}{\sqrt{1+x^2}(\sqrt{1+x^2}- 1)^2}
\max\left\{ 1, x , 2x(\sqrt{1+x^2}-1) \right\}<\alpha_0
\]
for all $x\in [\widetilde{S}_{\min}(a),\widetilde{S}_{\max}(a)]$.
\end{thm}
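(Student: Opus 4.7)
The plan is to evaluate $\alpha(f_{g,L},\widetilde{S}) = \beta(f_{g,L},\widetilde{S})\cdot\gamma(f_{g,L},\widetilde{S})$ at the linear starter $\widetilde{S}=L+ag$ and reduce both factors to expressions depending only on $x=\widetilde{S}$, so that the $\alpha$-test becomes a one-variable inequality on $[\widetilde{S}_{\min}(a),\widetilde{S}_{\max}(a)]$.

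For $\beta$, I would substitute $\widetilde{S}=L+ag$ directly into $f_{g,L}(S)=S-g\asinh(S)-L$ and observe that the $L$-terms cancel, yielding $f_{g,L}(\widetilde{S})=g(a-\asinh(\widetilde{S}))$, while $f'_{g,L}(\widetilde{S})=(\sqrt{1+\widetilde{S}^2}-g)/\sqrt{1+\widetilde{S}^2}$. Hence
\[
\beta(f_{g,L},\widetilde{S})=\frac{g\,|a-\asinh(\widetilde{S})|\,\sqrt{1+\widetilde{S}^2}}{\sqrt{1+\widetilde{S}^2}-g}.
\]
Using $g<1$ in the numerator, and $\sqrt{1+\widetilde{S}^2}-g\geq\sqrt{1+\widetilde{S}^2}-1$ in the denominator, removes all $g$-dependence.

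For $\gamma$, I would apply Lemma \ref{lemma_gamma} with $n=2$, so the finite list $a_2,\ldots,a_{n-1}$ is empty, to obtain
\[
\gamma(f_{g,L},\widetilde{S})\leq \frac{2\max\{1,\widetilde{S}\}}{1+\widetilde{S}^2}\,\max\Bigl\{1,\tfrac{g}{2(\sqrt{1+\widetilde{S}^2}-g)}\Bigr\},
\]
and then again use $g<1$ to reduce the inner maximum to $\max\{1,\,1/(2(\sqrt{1+\widetilde{S}^2}-1))\}$. Multiplying the two $g$-free bounds and absorbing a factor $(\sqrt{1+\widetilde{S}^2}-1)$ into the maximum yields
\[
\alpha(f_{g,L},\widetilde{S})\leq \frac{|a-\asinh(\widetilde{S})|}{\sqrt{1+\widetilde{S}^2}(\sqrt{1+\widetilde{S}^2}-1)^2}\,\max\{1,\widetilde{S}\}\,\max\{1,\,2(\sqrt{1+\widetilde{S}^2}-1)\}.
\]

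The step requiring care is the elementary algebraic identity
\[
\max\{1,x\}\cdot\max\{1,\,2(\sqrt{1+x^2}-1)\} \;=\; \max\{1,\,x,\,2x(\sqrt{1+x^2}-1)\},
\]
which converts the product of two maxima into the triple maximum appearing in the statement. This is verified by a short case analysis at the two thresholds $x=1$ and $x=\sqrt{5}/2$ (the latter being the unique positive root of $2(\sqrt{1+x^2}-1)=1$). Once this identity is applied, the upper bound for $\alpha$ matches exactly the expression whose boundedness by $\alpha_0$ is the hypothesis of the theorem, so the $\alpha$-test is satisfied throughout $R_2(a)$. The main obstacle is purely bookkeeping: stripping $g$ cleanly from both $\beta$ and $\gamma$ using only $0<g<1$, and folding the nested maxima into a single compact form.
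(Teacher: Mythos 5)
Your proposal is correct and follows essentially the same route as the paper: exact computation of $f_{g,L}(\widetilde{S})$ via the $L$-cancellation, the bound $\beta\le |a-\asinh(\widetilde S)|\sqrt{1+\widetilde S^2}/(\sqrt{1+\widetilde S^2}-1)$ from $g<1$, Lemma~\ref{lemma_gamma} with $n=2$ for $\gamma$, and the algebraic collapse of $\max\{1,\widetilde S\}\cdot\max\{1,2(\sqrt{1+\widetilde S^2}-1)\}$ into the triple maximum. The only difference is that you spell out the maximum-folding identity and the threshold $x=\sqrt5/2$ explicitly, whereas the paper states the resulting equality without elaboration.
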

\begin{proof}
By definition, we have
\[
\beta(f_{g,L},\widetilde{S})=\left|\frac{f_{g,L}(\widetilde{S})}{f_{g,L}'(\widetilde{S})} \right|
=\frac{g |a-\asinh(\widetilde{S})|}{1-\frac{g}{\sqrt{1+\widetilde{S}^2}}}
\leq \frac{ |a-\asinh(\widetilde{S})|}{1-\frac{1}{\sqrt{1+\widetilde{S}^2}}}.
\]

By Lemma~\ref{lemma_gamma} with $n=2$,
\[
\gamma(f_{g,L},\widetilde{S}) \leq \frac{2}{1+\widetilde{S}^2} \max\{1,\widetilde{S} \} \max\left\{1, \frac{1}{2(\sqrt{1+\widetilde{S}^2}-1)} \right\}.
\]

Therefore, we obtain that
\begin{align*}
\alpha(f_{g,L},\widetilde{S}) &=\beta(f_{g,L},\widetilde{S}) \cdot \gamma(f_{g,L},\widetilde{S}) \\
&\leq \frac{2 |a-\asinh(\widetilde{S})|}{\sqrt{1+\widetilde{S}^2}(\sqrt{1+\widetilde{S}^2}- 1)} \max\{1,\widetilde{S} \} \max\left\{1, \frac{1}{2(\sqrt{1+\widetilde{S}^2}-1)} \right\}\\
&= \frac{ |a-\asinh(\widetilde{S})|}{\sqrt{1+\widetilde{S}^2}(\sqrt{1+\widetilde{S}^2}- 1)^2}
\max\left\{ 1, \widetilde{S} , 2\widetilde{S}(\sqrt{1+\widetilde{S}^2}-1) \right\},
\end{align*}
which is less than $\alpha_0$ in $R_2(a)$ when $\widetilde{S} \in [\widetilde{S}_{\min}(a),\widetilde{S}_{\max}(a)]$.
\end{proof}

\begin{cor}\label{cor-stripes}
The starter $\widetilde{S}(g,L)=L+a \,g$ is an approximate zero of $f_{g,L}$ in the stripe
\[
R_2 (a) =\{0< g < 1, \widetilde{S}_{\min}(a) \leq L +a g \leq \widetilde{S}_{\max} (a) \}
\]
for the values of $a$, $\widetilde{S}_{\min}(a)$, $\widetilde{S}_{\max}(a)$ given in the following table:
\[
\begin{tabular}{c|c|c}
   a   & $\widetilde{S}_{\min}(a)$ & $\widetilde{S}_{\max}(a)$ \\
\hline
$0.91$ &   $0.99$              & $1.12$\\
$1.02$ &   $1.12$              & $1.32$\\
$1.16$ &   $1.32$              & $1.60$\\
$1.33$ &   $1.59$              & $2.01$\\
$1.56$ &   $2.00$              & $2.74$\\
$1.90$ &   $2.73$              & $4.00$\\
$2.30$ &   $4.00$              & $\infty$
\end{tabular}
\]
\end{cor}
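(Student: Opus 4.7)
The plan is to apply Theorem~\ref{thm-lineal} once for each of the seven rows of the table. For each prescribed value of $a$, the hypothesis of that theorem reduces the corollary to checking that the one-variable function
\[
h_a(x) \;=\; \frac{|a-\asinh(x)|}{\sqrt{1+x^2}\bigl(\sqrt{1+x^2}-1\bigr)^{2}}\,\max\bigl\{1,\,x,\,2x(\sqrt{1+x^2}-1)\bigr\}
\]
is strictly less than $\alpha_0=3-2\sqrt{2}$ on the interval $[\widetilde{S}_{\min}(a),\widetilde{S}_{\max}(a)]$ listed in the corresponding row.

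Before working through the rows individually, I would untangle the piecewise structure of the maximum $M(x)=\max\{1,x,2x(\sqrt{1+x^2}-1)\}$. A direct computation shows $M(x)=1$ on $[0,1]$, $M(x)=x$ on $[1,\sqrt{5}/2]$, and $M(x)=2x(\sqrt{1+x^2}-1)$ on $[\sqrt{5}/2,\infty)$, where $\sqrt{5}/2\approx 1.118$. Six of the seven intervals of the table satisfy $\widetilde{S}_{\min}(a)\geq 1.12>\sqrt{5}/2$, so the formula for $M(x)$ is unambiguous on them; only the first row ($a=0.91$, interval $[0.99,1.12]$) requires splitting the domain at $x=1$ and $x=\sqrt{5}/2$.

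The key analytic observation is that $|a-\asinh(x)|$ vanishes at $x=\sinh(a)$, which a quick numerical check confirms to lie strictly inside $[\widetilde{S}_{\min}(a),\widetilde{S}_{\max}(a)]$ for every row. Hence $h_a$ is continuous, non-negative, piecewise smooth, and vanishes in the interior of its interval, so its supremum is attained either at an endpoint, at the transition point $x=\sqrt{5}/2$ (only in the first row), or at an interior critical point of a smooth piece. A row-by-row calculus study—differentiating $h_a$ on each smooth piece, locating critical points, and comparing their values with the endpoint values—shows that no critical point beats the endpoint values and that the endpoint values themselves lie below $\alpha_0$; indeed, the intervals $[\widetilde{S}_{\min}(a),\widetilde{S}_{\max}(a)]$ in the table are chosen so that the bound is essentially saturated at both ends, which is why they appear in round-off form to two decimals. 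For the unbounded interval ($a=2.30$, interval $[4.00,\infty)$) one must additionally verify the limit at infinity, which follows from the asymptotics $|2.30-\asinh(x)|\sim\log(x)$, $\sqrt{1+x^2}(\sqrt{1+x^2}-1)^{2}\sim x^{3}$, and $2x(\sqrt{1+x^2}-1)\sim 2x^{2}$, giving $h_{2.30}(x)\sim 2\log(x)/x\to 0$.

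The main obstacle is not conceptual but computational: seven tight one-variable inequalities must be verified, and in each case the bound is nearly saturated at the endpoints of the stripe, so a sloppy numerical argument would fail. In practice I would carry out each row either via the analytic derivative study just outlined, or more safely via interval arithmetic on a sufficiently fine partition of $[\widetilde{S}_{\min}(a),\widetilde{S}_{\max}(a)]$, so as to certify strict inequality with a definite positive margin below $\alpha_0$.
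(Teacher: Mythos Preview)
Your proposal is correct and follows exactly the route the paper intends: the corollary is stated without proof immediately after Theorem~\ref{thm-lineal}, so the paper simply expects the reader to verify the one-variable inequality $h_a(x)<\alpha_0$ numerically on each of the seven intervals. Your additional observations---the piecewise description of $\max\{1,x,2x(\sqrt{1+x^2}-1)\}$, the location of the zero $x=\sinh(a)$ inside each interval, and the asymptotic check for the unbounded row---are sound refinements that the paper leaves entirely implicit.
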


\begin{proof}[Proof of Theorem~\ref{thm-starterKeplerS}]
By Theorem \ref{thm-cubica-ng}, $\widetilde{S}(g,L)$ is an approximate zero of $f_{g,L}$ when $0\leq L \leq 1-\frac56 g$. For the remaining cases, we use Corollary \ref{cor-stripes}. It is easy to verify that for each $a \in \{ 0.91, 1.02, 1.16, 1.33, 1.56, 1.90, 2.30\}$, the region where $\widetilde{S}$ is defined as $L+a \, g$ is contained in $R_2(a)$.
\end{proof}

\section{An analytical study of piecewise constant starters}\label{sec-const}

\begin{thm}\label{thm-const}
The constant starter $\widetilde{S}(g,L)=S_0 >0$ is an approximate zero of $f_{g,L}$ in the stripe
\[
R_3(S_0)=\left\{ 0<g<1, S_0-g\asinh(S_0) -\Delta (S_0) <L <
S_0-g\asinh(S_0) +\Delta(S_0) \right\},
\]
where
\[
\Delta (S_0)=
\begin{cases}
\frac{\sqrt{3}\alpha_0 S_0^3 \sqrt{1+S_0^2}}{2(\sqrt{1+S_0^2}+1)^{\nicefrac32}},    & \text{ if } 0 < S_0 \leq \frac{\sqrt7}3,\\[5mm]
\frac{\alpha_0 S_0 \min\{1,S_0\} \sqrt{1+S_0^2}}{2(\sqrt{1+S_0^2}+1)},              &  \text{ if } S_0 \geq \frac{\sqrt7}3.
\end{cases}
\]
The constant starter $\widetilde{S}(g,L)=0$ is an approximate zero of $f_{g,L}$ in the region
\[
R_3 (0)=\left\{ 0<g<1, 0<L < \min \left\{ \alpha_0 (1-g), \frac{\sqrt3 \alpha_0 (1-g)^\frac32 }{g^\frac12} \right\}  \right\}.
\]
\end{thm}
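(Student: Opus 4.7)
The strategy is to apply Smale's $\alpha$-test at $z = S_0$ and convert $\beta(f_{g,L},S_0)\cdot\gamma(f_{g,L},S_0) < \alpha_0$ into a two-sided condition on $L$ that matches $R_3(S_0)$. The factor $\beta$ is immediate from the definitions of $f_{g,L}$ and its derivative; using $g < 1$ we obtain
\[
\beta(f_{g,L},S_0) \leq \frac{|S_0 - g\asinh(S_0) - L|\sqrt{1+S_0^2}}{\sqrt{1+S_0^2}-1}
\]
when $S_0 > 0$, and $\beta(f_{g,L},0) = L/(1-g)$ when $S_0 = 0$.

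For $S_0 > 0$ I would bound $\gamma$ by Lemma \ref{lemma_gamma} with $n = 3$, again using $g < 1$ to control $a_2 = gS_0/(2(\sqrt{1+S_0^2}-g))$ and the $\sup_{k \geq 3}$ term. The key observation is that $(3(\sqrt{1+S_0^2}-1))^{-1/2} = 1$ exactly at $S_0 = \sqrt{7}/3$, which is precisely the case split appearing in $\Delta(S_0)$. For $S_0 \leq \sqrt{7}/3$ this quantity exceeds $1$, and a short calculation using $\sqrt{1+S_0^2}-1 = S_0^2/(\sqrt{1+S_0^2}+1)$ shows that $a_2/\sup_{k \geq 3}a_k \leq \sqrt{3(\sqrt{1+S_0^2}+1)}/4 \leq \sqrt{7}/4 < 1$, yielding $\gamma \leq 2/[\sqrt{3}(1+S_0^2)\sqrt{\sqrt{1+S_0^2}-1}]$; multiplying by $\beta$ and isolating $L$ produces the first branch of $\Delta$. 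For $S_0 \geq \sqrt{7}/3$ the inner maximum collapses to $1$ and one verifies $a_2 \leq 2\max\{1,S_0\}$, so $\gamma \leq 2\max\{1,S_0\}/(1+S_0^2)$; the identity $S_0^2/\max\{1,S_0\} = S_0\min\{1,S_0\}$ then converts the resulting bound into the second branch.

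The $S_0 = 0$ case requires a direct approach, since applying Lemma \ref{lemma_gamma} with $n = 3$ would introduce a superfluous factor of $2$. By Lemma \ref{lemma_asinh}, $P_k(0) = 0$ for even $k$, so the supremum defining $\gamma(f_{g,L},0)$ runs only over odd indices $k = 2n+1$ with $|P_{2n+1}(0)| = ((2n-1)!!)^2$. I would show directly that
\[
\gamma(f_{g,L},0) = \sup_{n \geq 1}\left(\frac{g((2n-1)!!)^2}{(2n+1)!(1-g)}\right)^{1/(2n)} \leq \max\left\{1,\sqrt{\frac{g}{3(1-g)}}\right\},
\]
using $((2n-1)!!)^2/(2n+1)! \leq 1/6$ (attained at $n=1$ and decreasing thereafter) together with a split at $g = 3/4$ to distinguish whether $\sqrt{g/(3(1-g))}$ lies below or above $1$. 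Substituting into $\beta \cdot \gamma < \alpha_0$ produces exactly the two conditions $L < \alpha_0(1-g)$ and $L < \sqrt{3}\alpha_0(1-g)^{3/2}/\sqrt{g}$ that define $R_3(0)$.

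The main technical obstacle is algebraic bookkeeping rather than any conceptual step: the two $a_2$-comparisons and the repeated use of the identity $\sqrt{1+S_0^2}-1 = S_0^2/(\sqrt{1+S_0^2}+1)$ to transform the raw $\alpha$-test inequality into the closed form $\Delta(S_0)$ require care. The choice $n = 3$ (rather than $n = 2$) in Lemma \ref{lemma_gamma} is what makes the threshold $\sqrt{7}/3$ emerge naturally and the constants in $\Delta$ come out clean.
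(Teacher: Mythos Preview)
Your proposal is correct and follows essentially the same route as the paper: bound $\beta$ via $g<1$, bound $\gamma$ via Lemma~\ref{lemma_gamma} with $n=3$ (again using $g<1$), verify that the $a_2$ term is dominated by the $\sup_{k\geq3}$ term, and rewrite using $\sqrt{1+S_0^2}-1=S_0^2/(\sqrt{1+S_0^2}+1)$. The only cosmetic differences are that the paper splits the case $S_0>0$ first at $S_0=1$ and then at $\sqrt{7}/3$ (you do it directly at $\sqrt{7}/3$, which is cleaner), and that for $S_0=0$ the paper bounds $((k-2)!!)^2/k!\leq 1/k$ and invokes Lemma~\ref{lemma_sup}, whereas your bound $((2n-1)!!)^2/(2n+1)!\leq 1/6$ actually yields the slightly tighter $\sqrt{g/(6(1-g))}$ rather than $\sqrt{g/(3(1-g))}$---but this still implies the stated $R_3(0)$.
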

\begin{proof}
Consider first the case $S_0>0$. By Definition \ref{def} and the fact that $g<1$,
\begin{align*}
\beta(f_{g,L},\widetilde{S}) &=\left|\frac{f_{g,L}(\widetilde{S})}{f_{g,L}'(\widetilde{S})} \right|
=\left|\frac{S_0-g\asinh(S_0)-L}{1-\frac{g}{\sqrt{1+S_0^2}}} \right| \\
&\leq \frac{\left|S_0-g\asinh(S_0)-L\right|}{1-\frac{1}{\sqrt{1+S_0^2}}} =\frac{1}{S_0^2}\left|S_0-g\asinh(S_0)-L\right| \sqrt{1+S_0^2} \left(\sqrt{1+S_0^2}+1 \right).
\end{align*}
On the other hand, it follows from Lemma~\ref{lemma_gamma} with $n=3$ that
\[
\begin{aligned}
\gamma(f_{g,L},\widetilde{S})
&\leq
\frac{1}{1+S_0^2} \max \left\{
\frac{ S_0}{2 (\sqrt{1+S_0^2}-1)},
2\max\{ 1,S_0 \} \max\left\{ 1,\left( \frac{1}{3(\sqrt{1+S_0^2}-1)}\right)^{\frac{1}{2}} \right\} \right\}\\
&=\frac{2}{1+S_0^2} \max\{ 1,S_0 \} \max\left\{ 1,\left( \frac{1}{3(\sqrt{1+S_0^2}-1)}\right)^{\frac{1}{2}} \right\}.
\end{aligned}
\]

We will now distinguish between $S_0\geq1$ and $0 < S_0 < 1$. In the first case, $\gamma(f_{g,L},\widetilde{S}) \leq \frac{2S_0}{1+S_0^2}$, so the $\alpha$-test follows from
\[
\frac{2\left(\sqrt{1+S_0^2}+1 \right) \left|S_0-g\asinh(S_0)-L\right|}{S_0\sqrt{1+S_0^2}}
 <\alpha_0,
\]
which is equivalent to
\begin{align*}
\left|S_0-g\asinh(S_0)-L \right|
&<
\frac{\alpha_0 S_0 \sqrt{1+S_0^2}}{2\left(\sqrt{1+S_0^2}+1 \right)}=\Delta(S_0).
\end{align*}

When $0 < S_0 < 1$, $\gamma(f_{g,L},\widetilde{S}) \leq  \frac{2}{1+S_0^2} \max\left\{ 1,\left( \frac{1}{3(\sqrt{1+S_0^2}-1)}\right)^{\frac{1}{2}} \right\}$ so
the $\alpha$-test is satisfied if
\[
\left|S_0-g\asinh(S_0)-L\right|
\frac{2\left(\sqrt{1+S_0^2}+1 \right)}{S_0^2\sqrt{1+S_0^2}} \max\left\{ 1,\left( \frac{1}{3(\sqrt{1+S_0^2}-1)}\right)^{\frac{1}{2}} \right\}
<\alpha_0,
\]
which is equivalent to
\begin{align*}
\left|S_0-g\asinh(S_0)-L \right|
&<
\min \left\{
\frac{\alpha_0 S_0^2 \sqrt{1+S_0^2}}{2 \left(\sqrt{1+S_0^2}+1 \right)},
\frac{\sqrt{3} \alpha_0 S_0^3 \sqrt{1+S_0^2}}{2\left(\sqrt{1+S_0^2}+1 \right)^\frac32}
\right\}\\
&=
\left\{
\begin{array}{cl}
\frac{\sqrt{3}\alpha_0 S_0^3 \sqrt{1+S_0^2}}{2(\sqrt{1+S_0^2}+1)^{\nicefrac32}} & \text{ if } 0 < S_0 \leq \frac{\sqrt7}3
\\[5mm]
\frac{\alpha_0 S_0^2 \sqrt{1+S_0^2}}{2(\sqrt{1+S_0^2}+1)} & \text{ if } \frac{\sqrt7}3 \leq S_0 <1
\end{array}
\right\}
=\Delta(S_0).
\end{align*}
Therefore, the $\alpha$-test is satisfied in the whole region $R_3 (S_0)$.

When $\widetilde{S}(g,L)=S_0=0$, we have
\[
\beta(f_{g,L},\widetilde{S})=\left|\frac{f_{g,L}(\widetilde{S})}{f_{g,L}'(\widetilde{S})} \right|=\frac{L}{1-g}
\]
and by Lemmas \ref{lemma_asinh} and \ref{lemma_sup}
\[
\begin{aligned}
\gamma(f_{g,L},\widetilde{S}) &= \sup_{k\geq2} \left| \frac{f^{(k)}(0)}{k! (1-g)}\right|^{\frac1{k-1}}
=\sup_{k\geq2} \left| \frac{g P_k(0)}{k! (1-g)}\right|^{\frac1{k-1}}
=\sup_{\doslineas{k\geq3}{\text{odd}}} \left| \frac{g ((k-2)!!)^2}{k! (1-g)}\right|^{\frac1{k-1}} \\
&\leq \sup_{k\geq3} \left| \frac{g}{k (1-g)}\right|^{\frac1{k-1}} = \max \left\{ 1, \sqrt{\frac{g}{3(1-g)}} \right\}.
\end{aligned}
\]
Therefore, the $\alpha$-test is satisfied if
\[
\frac{L}{1-g} \max \left\{ 1, \sqrt{\frac{g}{3(1-g)}} \right\} < \alpha_0,
\]
which is equivalent to $(g,L) \in R_3(0)$.
\end{proof}

\begin{remark}\label{remark-const}
The region $R_3(0)$ of Theorem \ref{thm-const} can be written as
\[
R_3(0)= \left\{ 0 < g < \frac34, 0\leq L < \alpha_0 (1-g) \right\} \cup  \left\{ \frac34 \leq g < 1, 0\leq L <  \frac{\sqrt3\alpha_0 (1-g)^\frac32}{g^\frac12} \right\}.
\]
\end{remark}

\begin{figure}[!ht]
\centering
\includegraphics[height=5cm]{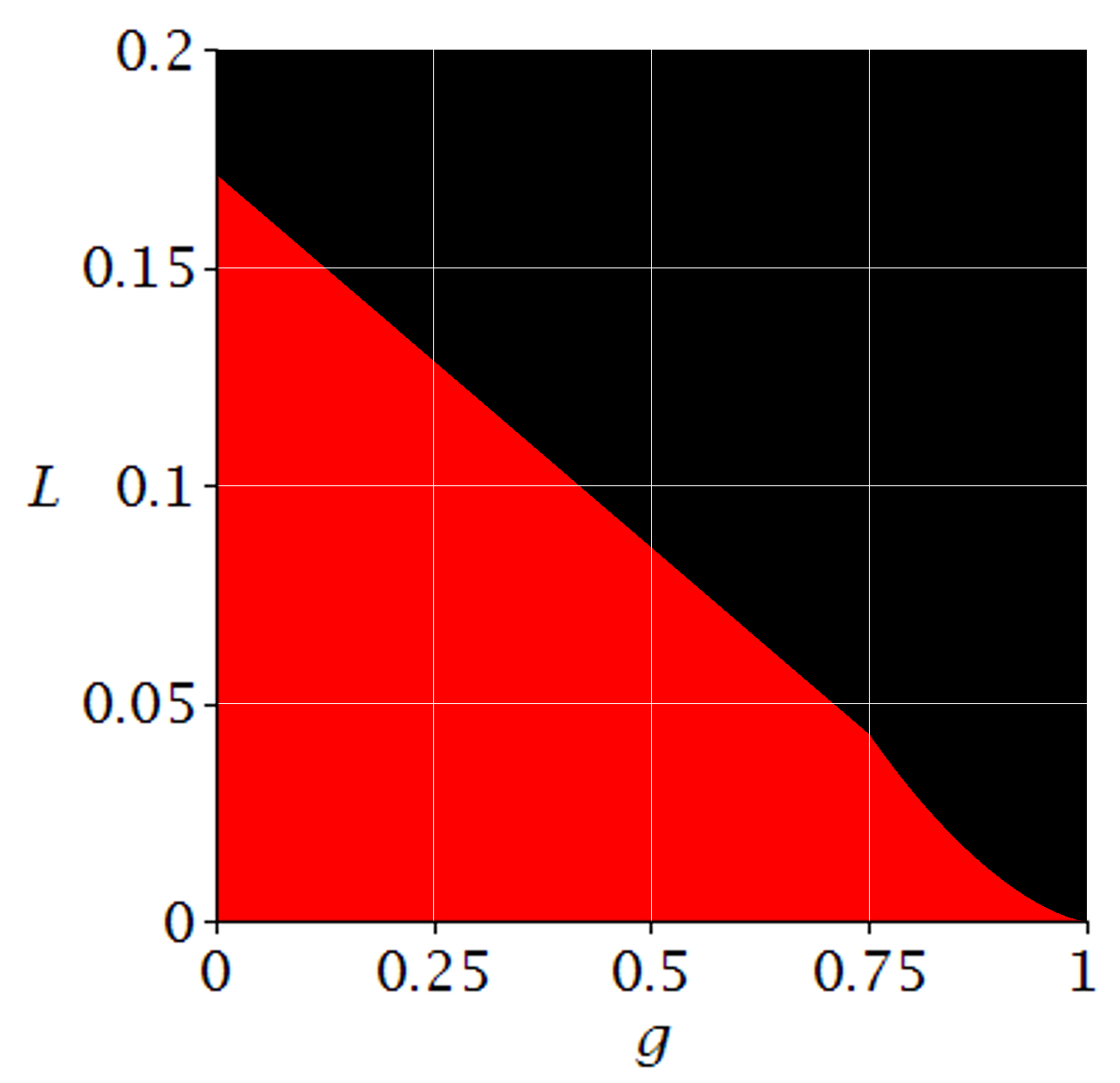}
\caption{The region $R_3(0)$ of Theorem \ref{thm-const} and Remark \ref{remark-const} is represented in red}
\end{figure}

\begin{lem}\label{lem-delta}
The function $\Delta(S_0)$ of Theorem \ref{thm-const} is strictly increasing for $S_0>0$.
\end{lem}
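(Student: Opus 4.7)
The plan is to substitute $u=\sqrt{1+S_0^2}$, a strictly increasing bijection from $(0,\infty)$ onto $(1,\infty)$, and to show that $\Delta$ viewed as a function of $u$ is continuous and strictly increasing. The driving identity is $S_0^2=u^2-1=(u-1)(u+1)$, which cancels the denominators $(u+1)^{3/2}$ and $(u+1)$ appearing in the two branches of $\Delta$.

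First I would split $(0,\infty)$ into three sub-intervals according to the thresholds $S_0=\sqrt{7}/3$ (the boundary between the two defining branches of $\Delta$) and $S_0=1$ (where $\min\{1,S_0\}$ changes); these correspond under the substitution to $u\in(1,4/3]$, $u\in[4/3,\sqrt{2}]$, and $u\in[\sqrt{2},\infty)$. Applying the identity above, I would simplify $\Delta$ on each piece to
\[
\tfrac{\sqrt{3}\,\alpha_0}{2}\,u\,(u-1)^{3/2},\qquad \tfrac{\alpha_0}{2}\,u(u-1),\qquad \tfrac{\alpha_0}{2}\,u\,\sqrt{(u-1)/(u+1)},
\]
respectively. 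The first two expressions are manifestly strictly increasing for $u>1$. For the third, I would observe that $(u-1)/(u+1)=1-2/(u+1)$ is strictly increasing and positive, so its square root is strictly increasing, and its product with the positive strictly increasing factor $u$ is strictly increasing.

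Next I would verify continuity at the two junction points. At $u=4/3$ both the first and second reduced formulas evaluate to $2\alpha_0/9$, and at $u=\sqrt{2}$ both the second and third evaluate to $\alpha_0(2-\sqrt{2})/2$, using the telescoping $(\sqrt{2}-1)(\sqrt{2}+1)=1$. Combined with strict monotonicity on each of the three pieces, this yields strict monotonicity of $\Delta$ on the whole half-line $(0,\infty)$.

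I do not anticipate any real obstacle. The only step requiring care is the algebraic simplification of each branch using the factorization $S_0^2=(u-1)(u+1)$, which turns mixed expressions in $S_0$ and $\sqrt{1+S_0^2}$ into elementary polynomials or simple radicals in $u$. Once that reduction is in hand, monotonicity on each piece and agreement at the two junctions are routine checks.
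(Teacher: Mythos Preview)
Your proof is correct and follows essentially the same strategy as the paper: rewrite each branch of $\Delta$ as a product of manifestly strictly increasing factors and check continuity at the junctions. The only cosmetic difference is that you streamline the algebra via the substitution $u=\sqrt{1+S_0^2}$ (and correspondingly split the second branch at $S_0=1$), whereas the paper keeps everything in terms of $S_0$ and handles the $\min\{1,S_0\}$ factor directly.
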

\begin{proof}
When $0<S_0 \leq \frac{\sqrt7}3$, then
\[
\Delta (S_0)= \frac{\sqrt{3}\alpha_0 S_0^3 \sqrt{1+S_0^2}}{2(\sqrt{1+S_0^2}+1)^{\nicefrac32}}
= \frac{\sqrt{3}\alpha_0}2  S_0^\frac52 \left(1-\frac{1}{1+S_0^2} \right)^\frac14 \left(  1-\frac{1}{\sqrt{1+S_0^2}+1} \right)^\frac32.
\]
When $S_0 > \frac{\sqrt7}3$, then
\[
\Delta(S_0)=\frac{\alpha_0 S_0 \min\{1,S_0\} \sqrt{1+S_0^2}}{2(\sqrt{1+S_0^2}+1)}
=\frac{\alpha_0}2   S_0 \min\{1,S_0\} \left(  1-\frac{1}{\sqrt{1+S_0^2}+1} \right).
\]
Since both expressions are the product of constant and strictly increasing functions and $\Delta(S_0)$ is continuous at $S_0=\frac{\sqrt7}3$, then $\Delta(S_0)$ is strictly increasing.
\end{proof}

\begin{proof}[Proof of Theorem~\ref{thm-const-cubre}]
We decompose the region of the theorem as
\[
\underbrace{(0,1-\varepsilon) \times [0,\varepsilon']}_{A}
\cup \underbrace{(0,1) \times [\varepsilon',L_{\max}]}_{B}
\]
From Remark \ref{remark-const}, we have that $A \subseteq R_3 (0)$.

Define the sequence $S_0=\varepsilon'$ and $S_{i+1}=S_i +2\Delta(S_i)$, for $i\geq0$. Since $\Delta$ is a positive and strictly increasing function by Lemma \ref{lem-delta}, the sequence $S_i$ is strictly increasing and satisfies
$S_i \geq S_0+2 i \Delta(S_0) \longrightarrow\infty$ as $i \to \infty$.

Since $\displaystyle{\lim_{x \to \infty}} (x-\asinh(x))=\infty$, there exists a large enough $N$ for which $S_N-\asinh(S_N) >L_{\max}$. Once we prove that $\displaystyle{B \subseteq C=\bigcup_{i=0}^N R_3 (S_i)}$, we will have that $A \cup B$ is covered by the regions where the starters $0,S_0,\ldots,S_N$ are approximate zeros of $f_{g,L}$. In particular, this will provide the piecewise constant starter $\widetilde{S}$ that we looked for. 

It only remains to prove that $B \subseteq C$. Note first that $R_3(S_i)$ are stripes whose union is
\[
  C = \left\{0<g<1, S_0-\Delta(S_0)-g\asinh(S_0)<L<S_N+\Delta(S_N)-g\asinh(S_N) \right\}
\]
since the lower boundary of $R_3(S_{i+1})$ is a segment that is always below the upper boundary of $R_3(S_i)$. Indeed, it is enough to see that the endpoints of the first segment, $(0,S_{i+1}-\Delta(S_{i+1}))$ and $(1,S_{i+1}-\Delta(S_{i+1})-\asinh(S_{i+1}))$, are below the endpoints of the second segment, $(0,S_i+\Delta(S_i))$ and $(1,S_i+\Delta(S_i)-\asinh(S_i))$, respectively:
\[
\begin{aligned}
  &S_{i+1}-\Delta(S_{i+1})<S_i+\Delta(S_i) \iff \Delta(S_i)<\Delta(S_{i+1}),\\
  &S_{i+1}-\Delta(S_{i+1})-\asinh(S_{i+1})<S_i+\Delta(S_i)-\asinh(S_i) \iff \\ &\qquad\iff\Delta(S_i)+\asinh(S_i)<\Delta(S_{i+1})+\asinh(S_{i+1}),
\end{aligned}
\]
which is true because $\Delta$ and $\asinh$ are both strictly increasing.

On the other hand, the vertices of the rectangle $B$ are in $\bar{C}$, which means that $B\subseteq C$. Indeed, it follows from $\asinh(\varepsilon'),\asinh(S_N),\Delta(\varepsilon'),\Delta(S_N)>0$; $\varepsilon'<L_{\max}$ by hypothesis and $S_N-\asinh(S_N)>L_{\max}$ by construction, that
\begin{align*}
  \varepsilon'-\Delta(\varepsilon')-&\asinh(\varepsilon')<  \varepsilon'-\Delta(\varepsilon')<   \varepsilon'<L_{\max}<\\
  &<S_N-\asinh(S_N)<S_N+\Delta(S_N)-\asinh(S_N)<S_N+\Delta(S_N).
\end{align*}
This implies
\begin{align*}
  &\varepsilon'-\Delta(\varepsilon')<\varepsilon'<S_N+\Delta(S_N)                                   \Rightarrow  (0,\varepsilon')\in\bar{C}, \\
  &\varepsilon'-\Delta(\varepsilon')-\asinh(\varepsilon')<\varepsilon'<S_N+\Delta(S_N)-\asinh(S_N)   \Rightarrow (1,\varepsilon')\in\bar{C},\\
  & \varepsilon'-\Delta(\varepsilon')<L_{\max}<S_N+\Delta(S_N)                                  \Rightarrow    (0,L_{\max})\in\bar{C},\\      &\varepsilon'-\Delta(\varepsilon')-\asinh(\varepsilon')<L_{\max}<S_N+\Delta(S_N)-\asinh(S_N)  \Rightarrow    (1,L_{\max})\in\bar{C},
\end{align*}
concluding the proof.
\end{proof}

\begin{proof}[Proof of Theorem~\ref{thm-const2}]
We proceed by contradiction, i.e. we assume that $S_0$ is an approximate zero of $f_{g,L}$ for all $(g,L)\in U$.
By definition, this means that
\begin{equation}\label{eq1}
  \left|\frac{S_0-g\asinh(S_0)-L}{\sqrt{1+S_0^2} \left(\sqrt{1+S_0^2}-g\right)} \right|\,
  \sup_{k\geq2}\left|\frac{gP_k(S_0)}{k!\left(\sqrt{1+S_0^2}-g\right)} \right|^\frac1{k-1}<\alpha_0
\end{equation}
for all $(g,L)\in U$. We will derive a contradiction from \eqref{eq1} for all $S_0\geq 0$.

Case $S_0=0$. Inequality \eqref{eq1} is equivalent to
\[
  \frac{L}{1-g}\sup_{k\geq2}\left|\frac{gP_k(0)}{k!(1-g)} \right|^\frac1{k-1}<\alpha_0.
\]
Using Lemma~\ref{lemma_asinh}, we obtain
\[
\alpha_0>\frac{L}{1-g}\sup_{\doslineas{k\geq3}{\text{odd}}}\left|\frac{g ((k-2)!!)^2}{k!(1-g)}\right|^\frac1{k-1}
  \geq \frac{L}{1-g}\sup_{\doslineas{k\geq3}{\text{odd}}}\left|\frac{g}{k(k-1)(1-g)}\right|^\frac1{k-1}\geq
  \frac{Lg^\frac12}{\sqrt{6}(1-g)^\frac32}.
\]
This implies that $L\leq\frac{\sqrt{6}\alpha_0(1-g)^\frac32}{g^\frac12}$ in $\bar{U}$. Therefore $L\leq 0$ in
$\bar{U}\cap\{g=1\}=\{1\}\times[0,\varepsilon]$, which is impossible.

For the rest of the cases, we take limit as $g \to 1^-$ and then limit as $L \to 0^+$, obtaining
\begin{equation}\label{eq3}
\left|\frac{S_0-\asinh(S_0)}{\sqrt{1+S_0^2}\left(\sqrt{1+S_0^2}-1\right)} \right|\,
  \sup_{k\geq2}\left|\frac{P_k(S_0)}{k!\left(\sqrt{1+S_0^2}-1\right)} \right|^\frac1{k-1} \leq \alpha_0.
\end{equation}

Case $0<S_0<1.598$. Inequality \eqref{eq3} implies that
\[
\alpha_0 \geq \frac{(S_0-\asinh(S_0)) (\sqrt{1+S_0^2}+1)^2}{2S_0^3 \sqrt{1+S_0^2}} \geq
\frac{2(S_0-\asinh(S_0))}{S_0^3 }
\]
because $(t+1)^2/t \geq 4$ for all $t>0$.
Therefore, $h(S_0)=\asinh(S_0) -S_0+\frac{\alpha_0S_0^3}{2} \geq0$, which is false in the interval $(0,1.598)$ because $h$ has only one critical point there, which is a minimum, and $h(0),h(1.598)\leq0$.

Case $1.598\leq S_0<3.1$. Inequality \eqref{eq3} is equivalent to
\[
  \frac{(S_0-\asinh(S_0))(\sqrt{1+S_0^2}+1)^\frac32 \sqrt{2S_0^2-1}}{\sqrt6 S_0^3 \sqrt{1+S_0^2}} \leq\alpha_0.
\]
Since $\frac{(\sqrt{1+S_0^2}+1)^\frac32 \sqrt{2S_0^2-1}}{S_0 \sqrt{1+S_0^2}}$ is increasing in this interval, then
\[
 \alpha_0 
 \geq \frac{S_0-\asinh(S_0)}{\sqrt6 S_0^2} \cdot \frac{(\sqrt{1+1.598^2}+1)^\frac32 \sqrt{2 \cdot 1.598^2-1}}{1.598 \sqrt{1+1.598^2}}.
\]
Therefore,
\[
\frac{S_0-\asinh(S_0)}{S_0^2} \leq\frac{\sqrt{6}\alpha_0 1.598 \sqrt{1+1.598^2}}{(\sqrt{1+1.598^2}+1)^\frac32 \sqrt{2 \cdot 1.598^2-1}} <0.13,
\]
and thus $S_0-\asinh(S_0)-0.13S_0^2<0$. This is false since the function $h(S_0)=S_0-\asinh(S_0)-0.13S_0^2$ has an absolute minimum at $S_0=3.1$ and  $h(3.1)>0$.

Case $3.1\leq S_0<9.62$. Inequality \eqref{eq3} is equivalent to
\begin{equation}\label{eq2}
  \frac{(S_0-\asinh(S_0))(\sqrt{1+S_0^2}+1)^\frac43 (6S_0^2-9)^\frac13}{\sqrt[3]{24} S_0^\frac73  \sqrt{1+S_0^2}} \leq\alpha_0.
\end{equation}
On the one hand,
\[
\frac{(\sqrt{1+S_0^2}+1)^\frac43 (6S_0^2-9)^\frac13}{S_0  \sqrt{1+S_0^2}} \geq \frac{(\sqrt{1+9.62^2}+1)^\frac43 (6\cdot 9.62^2-9)^\frac13}{9.62 \sqrt{1+9.62^2}} > 2.064,
\]
since the expression on the left is decreasing. On the other hand, $\frac{S_0-\asinh(S_0)}{S_0^\frac43} > 0.24$ because $h(S_0)=S_0-\asinh(S_0)-0.24S_0^\frac43$ is increasing in the interval, so $h(S_0)\geq h(3.1)>0$.

Substituting these inequalities in Eq.~\eqref{eq2}, we obtain
\[
\alpha_0 \geq \frac{(S_0-\asinh(S_0))(\sqrt{1+S_0^2}+1)^\frac43 (6S_0^2-9)^\frac13}{\sqrt[3]{24} S_0^\frac73  \sqrt{1+S_0^2}} > \frac{2.064 \cdot 0.24}{\sqrt[3]{24}}>0.1717>\alpha_0,
\]
which is a contradiction.

Case $S_0\geq 9.62$. Using that $\frac{S_0-\asinh(S_0)}{\sqrt{1+S_0^2}}\geq\frac{1}{2}$  and
that $\sqrt{1+S_0^2}-1\leq S_0$ in the interval, we obtain from Eq.~\eqref{eq3} that
\[
\alpha_0
\geq \frac{1}{2\left(\sqrt{1+S_0^2}-1\right)}
  \sup_{k\geq2}\left|\frac{P_k(S_0)}{k!\left(\sqrt{1+S_0^2}-1\right)} \right|^\frac1{k-1}
\geq \frac{1}{2S_0} \sup_{k\geq2}\left|\frac{P_k(S_0)}{k!S_0} \right|^\frac1{k-1}.
\]
This implies that, for all $k\geq 2$,
\[
  |P_k(S_0)|\leq (2\alpha_0)^{k-1}k!S_0^k.
\]
By Lemma \ref{lemma_asinh}, the leading term of $P_k(S_0)$ is $\pm (k-1)!S_0^{k-1}$, its coefficients add up
to a maximum of $(2k-3)!!$ and it has no term of degree $k-2$, so
\[
  (2\alpha_0)^{k-1}k!S_0^k\geq |P_k(S_0)|\geq(k-1)!S_0^{k-1}-(2k-3)!!S_0^{k-3}, \text{ for all $k\geq 2$},
\]
which is equivalent to
\[
  (2\alpha_0)^{k-1}kS_0^3-S_0^2+\frac{(2k-3)!!}{(k-1)!}\geq 0, \text{ for all $k\geq 2$}.
\]
This implies that
\[
  h_k(S_0)=(2\alpha_0)^{k-1}kS_0^3-S_0^2+2^{k-1}\geq 0, \text{ for all $k\geq 2$}.
\]
The degree $3$ polynomial $h_k$ has two local extrema, a local maximum
at $S_0=0$ where $h_k(0)=2^{k-1}>0$ and a local minimum at $S_0=r_{\min}(k)=\frac{2}{3k(2\alpha_0)^{k-1}}$. Since
$h_k(r_{\min}(k))<0$ if and only if $k\geq 5$, $h_k$ has two positive roots when $k\geq5$, namely $r_{\text{left}}(k)$
and $r_{\text{right}}(k)$. Therefore, $h_k(S_0)\geq0$ is equivalent to $S_0\in [0,r_{\text{left}}(k)]\cup[r_{\text{right}}(k),\infty)$ when $k\geq5$.

We will obtain a contradiction by showing that $S_0\in(r_{\text{left}}(k),r_{\text{right}}(k))$ for some $k\geq 5$. This is necessarily true since $\cup_{k\geq 5}(r_{\text{left}}(k),r_{\text{right}}(k))=(r_{\text{left}}(5),\infty)\supseteq (r_{\min}(5),\infty)\supseteq[9.62,\infty)\ni S_0$.

It only remains to be proven that $\cup_{k\geq 5}(r_{\text{left}}(k),r_{\text{right}}(k))=(r_{\text{left}}(5),\infty)$. This follows from $r_{\text{right}}(k)>r_{\min}(k)=\frac{2}{3k(2\alpha_0)^{k-1}}\longrightarrow\infty$ as $k\to\infty$, and the fact that $r_{\text{left}}(k+1)<r_{\text{right}}(k)$ for all $k\geq 5$. Indeed, since $h_{k+1}(r_{\min}(k))<0$ and $r_{\min}(k)<r_{\text{right}}(k)$, then $r_{\text{left}}(k+1)<r_{\min}(k)<r_{\text{right}}(k)$.
\end{proof}

\section{Conclusions}

We provided in Theorem~\ref{thm-starterKeplerS} a very simple starter $\widetilde{S}(g,L)$ for the hyperbolic Kepler's
equation which uses a single addition and multiplication when $L>1-\frac56g$ and a few arithmetic operations, a cubic
and square root otherwise. Our starter can be implemented efficiently on modern computers and has guaranteed quadratic convergence speed from the first iteration, so the relative error becomes negligible after just a few iterations for any value of $g\in(0,1)$ and $L\in[0,\infty)$.

If an even simpler starter is necessary, we showed in Theorem~\ref{thm-const-cubre} that it is possible to produce a piecewise constant starter in any bounded region of $(0,1)\times[0,\infty)$, excluding a neighborhood of the corner $g=1,L=0$. This starter also has quadratic convergence rate and its evaluation requires no operations, so it can be efficiently implemented as a look-up table.

\section*{Acknowledgements}
The first author is partially supported by the MINECO grant ESP2013-44217-R, the second author by the MINECO grant MTM2011-22621 and the FQM-327 group (Junta de Andaluc\'ia, Spain).


\begin{thebibliography}{99}
\bibitem{AMO14} M. Avendano, V. Mart\'in-Molina, J. Ortigas-Galindo,
    \textsc{Solving Kepler's equation via Smale's $\alpha$-theory.} Celest. Mech. Dyn. Astron. 119, 27--44, 2014.
\bibitem{BCSS} L.~Blum, F.~Cuker, M.~Shub, S.~Smale, \textsc{Complexity and real computation.}
    Springer, New York, 1997.
\bibitem{Colw} P.~Colwell, \textsc{Solving Kepler's equation over three centuries.}
    Willmann-Bell, Richmond, VA, 1993.
\bibitem{DB83} J.M.A.~Danby, T.M.~Burkardt, \textsc{The Solution of Kepler's Equation I.}
    Celest. Mech. 31, 95--107, 1983.
\bibitem{F97} T.~Fukushima, \textsc{A method solving Kepler's equation for hyperbolic case.}
    Celest. Mech. Dyn. Astron. 68, 121--137, 1997.
\bibitem{OG88} R.H.~Gooding, A.W.~Odell,  \textsc{The hyperbolic Kepler Equation (and the elliptic equation revisited).} Celest. Mech. 44, 267--282, 1988.
\bibitem{Jeffrey} A.~Jeffrey, H.H.~Dai,  \textsc{Handbook of mathematical formulas and integrals.}
 Elsevier, Amsterdam, 4th Ed., 2008.
\bibitem{MP94} O.~Montenbruck, T.~Pfleger, \textsc{Astronomy on the Personal Computer.}
 Springer-Verlag, Berlin, 2nd Ed., 1994.
\bibitem{Ng} E.W.~Ng, \textsc{A general algorithm for the solution of Kepler's equation for elliptic orbits.}
    Celest. Mech. 10, 243--249, 1979.
\bibitem{OG86} A.W.~Odell, R.H.~Gooding, \textsc{Procedures for Solving Kepler's Equation.}
Celest. Mech. 38, 307--334, 1986.
\bibitem{P77} J.E.~Prussing, \textsc{Bounds on the Solution to Kepler's Problem.}
J. Astronaut. Sci., 25, 123--128, 1977.
\bibitem{S86} R.A.~Serafin, \textsc{Bounds on the solution to Kepler's equation.}
Celest. Mech. 38, 111-121, 1986.
 \bibitem{TB89} L.G.~Taff, T.A.~Brennan, \textsc{On Solving Kepler's Equation.}
    Celest. Mech. Dyn. Astron. 46, 163--176, 1989.
 \bibitem{WH89} X.~Wang, D.~Han, \textsc{On Dominating Sequence Method in the Point Estimate
    and Smale's Theorem.} Scientia Sinica Ser. A, 905--913, 1989.
\end{thebibliography}
\end{document}